\pdfoutput=1
\documentclass[11pt]{article}
\usepackage[margin=1in]{geometry}
\usepackage[utf8]{inputenc}
\usepackage[T1]{fontenc}
\usepackage{lmodern}
\usepackage{amsmath,amssymb,amsthm}
\usepackage{aliascnt}
\usepackage{booktabs}
\usepackage{graphicx}
\usepackage{tikz}
\usepackage{enumitem}
\usepackage{xcolor}
\usepackage[hidelinks]{hyperref}
\usepackage[capitalise]{cleveref}

\newtheorem{theorem}{Theorem}
\newaliascnt{lemma}{theorem}
\newtheorem{lemma}[lemma]{Lemma}
\aliascntresetthe{lemma}
\newaliascnt{proposition}{theorem}
\newtheorem{proposition}[proposition]{Proposition}
\aliascntresetthe{proposition}
\newaliascnt{corollary}{theorem}
\newtheorem{corollary}[corollary]{Corollary}
\aliascntresetthe{corollary}
\theoremstyle{definition}
\newaliascnt{definition}{theorem}
\newtheorem{definition}[definition]{Definition}
\aliascntresetthe{definition}
\theoremstyle{remark}
\newaliascnt{remark}{theorem}
\newtheorem{remark}[remark]{Remark}
\aliascntresetthe{remark}
\crefname{lemma}{Lemma}{Lemmas}
\Crefname{lemma}{Lemma}{Lemmas}
\crefname{proposition}{Proposition}{Propositions}
\Crefname{proposition}{Proposition}{Propositions}
\crefname{corollary}{Corollary}{Corollaries}
\Crefname{corollary}{Corollary}{Corollaries}
\crefname{definition}{Definition}{Definitions}
\Crefname{definition}{Definition}{Definitions}
\crefname{remark}{Remark}{Remarks}
\Crefname{remark}{Remark}{Remarks}

\newcommand{\F}{\mathbb{F}}
\newcommand{\Qp}{\mathcal{Q}^{+}}
\newcommand{\Qm}{\mathcal{Q}^{-}}
\newcommand{\dg}{\operatorname{deg}}
\newcommand{\Sp}{\mathrm{Sp}}
\newcommand{\GO}{\mathrm{O}}
\newcommand{\obs}[1]{\texttt{#1}}

\newcommand{\repo}{\url{https://github.com/BrianLi009/four-qubit-hyperbolic-quadric}}

\title{The Four-Qubit Hyperbolic Quadric Has Contextuality Degree 315}
\author{
  Zhengyu Li\\
  Georgia Institute of Technology\\
  \texttt{brian.li@gatech.edu}
  \and
  Amandeep Gupta\\
  Georgia Institute of Technology\\
  \texttt{alnu46@gatech.edu}
  \and
  Vijay Ganesh\\
  Georgia Institute of Technology\\
  \texttt{vganesh45@gatech.edu}
}
\date{}

\begin{document}
\maketitle

\begin{abstract}
The contextuality degree of a set of Pauli contexts is the least number of contexts that every noncontextual $\pm1$ assignment violates.
For the four-qubit hyperbolic quadric $\Qp(7,2)$, with $135$ points and $1575$ lines as contexts, Muller \emph{et al.}\ conjectured from heuristic searches that the degree is $315$ and that the violated contexts of every optimal assignment form a dual polar space $DW(5,2)$.
We prove both conjectures.
The lower bound has a short proof by hand: every doily ($GQ(2,2)$) of the quadric contains ten Peres--Mermin grids covering each of its lines four times, so every assignment violates at least three of its fifteen lines, and averaging over the doilies gives $1575/5=315$.
The equality case is local: an assignment is optimal exactly when it violates one ruling of one grid in every doily.
Using this as a SAT encoding, we show that there are exactly $240$ optimal violated sets, one orbit of $\GO^+(8,2)$, each isomorphic to $DW(5,2)$.
The same argument gives the first lower bounds $\lceil L/5\rceil$ for all quadrics with $L$ lines in $N\ge4$ qubits.
The degree theorem is kernel-checked in Lean~4, and the completeness of the enumeration is certified by a DRAT proof checked by the verified checker \texttt{cake\_lpr}.
\end{abstract}

\section{Introduction}
The Kochen--Specker theorem~\cite{KS67,Bell66} rules out noncontextual hidden-variable models of quantum mechanics: there is no assignment of definite values to observables, independent of which compatible observables are measured alongside, that respects the functional relations quantum mechanics imposes on every set of commuting observables.
The most transparent proofs use Pauli observables.
In the Peres--Mermin square~\cite{Peres90,Mermin90,Mermin93} nine two-qubit Pauli observables are arranged in a $3\times3$ grid whose rows and columns are triples of commuting observables with product $+I$, except for one column with product $-I$. No assignment of $\pm1$ values satisfies all six constraints.
Such state-independent proofs underlie experimental tests of contextuality~\cite{Cabello08,Kirchmair09,Amselem09} and connect contextuality to the power of quantum computation~\cite{Raussendorf13,Howard14}.

The non-identity $N$-qubit Pauli observables, taken up to sign, are the points of the symplectic polar space $W(2N-1,2)$ of rank $N$ over $\F_2$, two observables commuting exactly when the corresponding points are orthogonal~\cite{SanigaPlanat07,Havlicek09,Thas09}.
This dictionary has made finite geometry a natural language for observable-based contextuality~\cite{Holweck17,Trandafir22,Levay13,SanigaLevayPracna12}.
Any family of totally isotropic lines, each line $\{A,B,AB\}$ carrying the sign of $A\cdot B\cdot AB=\pm I$, defines a candidate contextuality configuration.
De Boutray, Holweck, Giorgetti, Masson and Saniga~\cite{deBoutray22} introduced the \emph{contextuality degree}, the least number of contexts that a noncontextual assignment must violate, as a measure of how contextual such a configuration is.
It governs both the classical bound of the associated noncontextuality inequality~\cite{Cabello10,Kelleher24} and the classical winning probability of the associated Mermin-type games~\cite{Arkhipov12,Kelleher24}.
Computing it is a MAX-XOR-SAT problem, and exact values are known only for small configurations: the doily ($3$), the three-qubit space $W(5,2)$ ($63$) and its quadrics ($21$ and $9$)~\cite{deBoutray22,Muller22doilies,Saniga24hex}.

For four qubits the problem has resisted computation.
The hyperbolic quadric $\Qp(7,2)$ has $135$ points and $1575$ lines.
Muller, Saniga, Giorgetti, de Boutray and Holweck~\cite[Table~1]{Muller23} could only show that its degree is at most $500$.
Muller, Saniga, Giorgetti, Holweck and Kelleher~\cite[\S5.1.2]{Muller24} then found, with a heuristic search, assignments violating only $315$ lines, observed that the $315$ unsatisfied lines are isomorphic to the dual polar space $DW(5,2)$, and wrote: ``So, we do believe that $d^{hyp}_4=315$, with the understanding that the corresponding unsatisfied configurations are (always) isomorphic to $DW(5,2)$.''
Saniga, Holweck, Kelleher, Muller, Giorgetti and de Boutray~\cite[p.~24]{Saniga24hex} single out a closely related question as ``computationally rather challenging'', whose affirmative answer ``would mean, among other things, that the degree of contextuality of a four-qubit hyperbolic quadric is 315''.
Muller and Giorgetti~\cite[\S5.3]{MullerGiorgetti24} and Muller's thesis~\cite{MullerThesis} list $315$ as an upper bound only.

\paragraph{Results.}
We settle both parts of this conjecture.
\begin{itemize}[leftmargin=*]
\item \textbf{\Cref{thm:main}.} Every hyperbolic quadric of the four-qubit space, with its $1575$ lines as contexts, has contextuality degree exactly $315$.
\item \textbf{\Cref{prop:ruling,thm:dw}.} In every optimal assignment, each of the $40\,320$ doilies of the quadric has exactly three violated lines, which form one ruling of one of its grids. There are exactly $240$ sets of lines that arise as the violated set of an optimal assignment; they form one orbit of the orthogonal group $\GO^+(8,2)$, and each is isomorphic to $DW(5,2)$. The first statement has a short proof by hand; the second is computer-assisted.
\item \textbf{\Cref{thm:general}.} For every $N\ge3$, every hyperbolic or elliptic quadric of the $N$-qubit space, with $L$ lines, has degree at least $\lceil L/5\rceil$. This is tight for $N=3$ and for the four-qubit hyperbolic quadric, and to our knowledge it gives the first lower bounds for the elliptic four-qubit quadric and for all quadrics with $N\ge5$ qubits.
\item \textbf{\Cref{cor:inequality}.} The $1575$ contexts give a state-independent noncontextuality inequality with classical bound exactly $945$ and quantum value $1575$; equivalently, a uniformly random context is satisfied with probability at most $4/5$ by any noncontextual assignment, and this bound is attained.
\end{itemize}

The proof of the lower bound is short and needs no computation, and we regard this as its main point: a value that resisted solver-based attacks follows from three classical ingredients.
These are the parity argument behind the Peres--Mermin square; Sylvester's model of the generalised quadrangle $GQ(2,2)$, which shows that a doily contains ten grids covering each of its lines four times; and Witt's theorem, which makes the orthogonal group transitive on the lines of the quadric.
The averaging principle itself is not new. Saniga \emph{et al.}~\cite[\S4.1]{Saniga24hex} obtained the degree $63$ of $W(5,2)$ by averaging over its doilies, and Muller \emph{et al.}~\cite[\S4]{Muller24} bounded the degree of the whole space $W(2N-1,2)$ by a recursion over subspaces; as far as we know, it had not been applied to the quadrics, where it turns out to be tight.
The rounding $\lceil 10/4\rceil=3$ inside each doily is what makes the bound tight: averaging over the $67\,200$ grids of the quadric directly gives only $\lceil1575/6\rceil=263$.
What we regard as the main new content is what the equality case gives: a local characterisation of all optimal assignments (\cref{prop:ruling}), which turns the classification of the optima into an easy and certified computation (\cref{thm:dw}), together with a formal proof of the degree.

We formalised the theorem for the real quadric (the observables with an even number of $Y$ factors) in Lean~4 with Mathlib~\cite{Lean4,Mathlib}; the kernel checks every finite fact, and the statement depends only on the standard axioms (\cref{sec:lean}).
All code and certificates are available at \repo.

\paragraph{For the formal-methods reader.}
Computing a contextuality degree is a MAX-XOR-SAT problem, here with $135$ variables and $1575$ parity constraints of length three, and this one had resisted the solver-based attacks of~\cite{Muller23,Muller24}.
SAT solving has settled several open problems in combinatorics, often with the help of mathematical insight in the encoding~\cite{Heule16,Zulkoski15}.
The present case is an extreme instance of that pattern: the mathematics settles the optimisation problem completely, and what remains for the solver, the classification of all optima, becomes easy only after the proof is turned into constraints.
\Cref{sec:sat,sec:ablation} give the encoding and a comparison: direct MaxSAT and cardinality formulations do not finish within $30$ minutes, while local cardinality constraints over doilies, together with redundant clauses derived from \cref{prop:doily}, enumerate all optimal solutions in about $30$ seconds; the final unsatisfiability, which gives completeness, is certified by a DRAT proof checked by \texttt{drat-trim} and by the verified checker \texttt{cake\_lpr}.
\Cref{sec:lean} describes the Lean proof, in particular how $191$ kernel evaluations were restructured to fit in $2.2$\,GB of memory, and \cref{sec:costs} lists the running time of every computation.

\paragraph{Outline.}
\Cref{sec:related} reviews related work and \cref{sec:prelim} sets up the geometry and the symmetries of the problem.
\Cref{sec:doily} proves that every doily of a quadric has degree at least $3$.
\Cref{sec:main} proves the main theorem, \cref{sec:optimal} studies the optimal assignments, and \cref{sec:general} extends the lower bound to all quadrics.
\Cref{sec:physics} discusses the physical consequences, \cref{sec:lean} the Lean proof, \cref{sec:computations} the encodings, certificates and running times, and \cref{sec:open} open problems.

\section{Related work}\label{sec:related}

\paragraph{Contextuality.}
The Kochen--Specker theorem~\cite{KS67,Bell66} and its many proofs are surveyed by Budroni, Cabello, G\"uhne, Kleinmann and Larsson~\cite{Budroni22}.
Proofs with Pauli observables began with the Peres--Mermin square and Mermin's pentagram~\cite{Peres90,Mermin90,Mermin93}; Waegell and Aravind constructed families of observable-based proofs in the $N$-qubit Pauli group~\cite{Waegell13}, and Trandafir, Lison\v{e}k and Cabello characterised the irreducible magic sets of $n$ qubits~\cite{Trandafir22}.
State-independent contextuality was turned into experimentally testable inequalities by Cabello~\cite{Cabello08,Cabello10} and tested with trapped ions and photons~\cite{Kirchmair09,Amselem09}; tests of whole symplectic polar spaces on noisy quantum computers were reported by Holweck~\cite{Holweck21}.
Quantitative measures of contextuality include the contextual fraction of Abramsky, Barbosa and Mansfield~\cite{Abramsky17}, in the sheaf-theoretic framework of~\cite{Abramsky11}; the contextuality degree studied here is a combinatorial measure for observable-based proofs, closely tied to the classical value of Mermin-type games~\cite{Arkhipov12,Kelleher24}.
Contextuality is a resource for quantum computation: for magic-state injection~\cite{Howard14}, for measurement-based computation~\cite{Raussendorf13}, and for qubit computation models~\cite{BermejoVega17}.

\paragraph{The finite geometry of Pauli observables.}
The identification of the $N$-qubit Pauli group modulo phases with the symplectic polar space $W(2N-1,2)$ is due to Saniga and Planat~\cite{SanigaPlanat07}, with general qudit versions by Havlicek, Odehnal and Saniga~\cite{Havlicek09} and Thas~\cite{Thas09}.
This dictionary has been used to classify contextual configurations: Mermin pentagrams as ovoids and their three-qubit taxonomy~\cite{SanigaLevay12,LevaySzabo17}, the Veldkamp spaces of multiple qubits~\cite{VranaLevay10}, the split Cayley hexagon of order two and its role for three qubits~\cite{LevaySanigaVrana08,Planat13,Holweck22hex}, the polar subspaces of small rank~\cite{Saniga21}, and the real four-qubit Pauli group as a hyperbolic quadric of $PG(7,2)$ with its ovoids~\cite{SanigaLevayPracna12}.
The Lagrangian Grassmannian $LGr(N,2N)$ relates $N$-qubit and $2^{N-1}$-qubit Pauli groups~\cite{Holweck14,Levay13}; for $N=3$ it maps the $135$ planes of $W(5,2)$ to the points of the four-qubit hyperbolic quadric, which is how $DW(5,2)$ appears in our problem.
For the geometry itself we rely on~\cite{PayneThas09,HirschfeldThas16,Taylor92,DeBruyn06}.

\paragraph{The contextuality degree.}
De Boutray, Holweck, Giorgetti, Masson and Saniga~\cite{deBoutray22} introduced the degree and computed it for the quadrics of three qubits ($21$ and $9$).
Muller \emph{et al.}\ enumerated the multi-qubit doilies and showed that every doily has degree $3$~\cite{Muller22doilies}; they then computed degrees of many configurations with SAT solvers and sparse linear algebra, found that the whole three-qubit space $W(5,2)$ has degree $63$, and gave upper bounds for four and five qubits~\cite{Muller23}.
A heuristic search then gave the upper bounds of \cref{tab:general} for four to seven qubits, together with the conjectures that we prove~\cite{Muller24}.
Saniga \emph{et al.}\ explained the value $63$ by averaging over doilies and showed that the unsatisfied configurations of $W(5,2)$ are classically embedded split Cayley hexagons~\cite{Saniga24hex,Holweck22hex}; Muller and Giorgetti identified an abstract structure that determines the degree of observable-based proofs~\cite{MullerGiorgetti24}, and Muller's thesis collects the algorithmic side~\cite{MullerThesis}.
Our lower bound uses the same averaging principle as~\cite{Saniga24hex,Muller24}; what is new is that, applied to the doilies of a quadric, it is tight, and that its equality case characterises the optimal assignments.

\paragraph{SAT solving and formal proof in mathematics.}
SAT solvers have settled several long-open combinatorial problems, among them the Boolean Pythagorean triples problem~\cite{Heule16}, Schur number five~\cite{Heule18} and the last open dimension of Keller's conjecture~\cite{Brakensiek20}.
Combining SAT solvers with computer algebra (SAT+CAS) was proposed in MathCheck~\cite{Zulkoski15}.
For Kochen--Specker sets in three dimensions, the minimum size was bounded from below by exhaustive search~\cite{Uijlen16}, then by SAT modulo symmetries with co-certificate learning~\cite{Kirchweger23} and by SAT+CAS with certified proofs~\cite{LiBrightGanesh24}.
Unsatisfiability results are made checkable by clausal proofs: DRAT proofs checked by \texttt{drat-trim}~\cite{DRATtrim}, and verified checkers such as \texttt{cake\_lpr}~\cite{cakeLPR}.
Parity constraints such as ours are handled natively by solvers with Gaussian elimination~\cite{Soos09}; MaxSAT solvers such as RC2~\cite{RC2} address the optimisation version directly.
Large finite computations have been formally verified before in proof assistants, notably in the proofs of the four colour theorem in Coq~\cite{Gonthier08} and of the Kepler conjecture in HOL Light and Isabelle~\cite{Hales17}.
Our use of SAT differs from these in that the solver is not used to establish the main theorem: the degree is proved by hand and kernel-checked in Lean~\cite{Lean4,Mathlib}, and the solver is used only for the classification of the optima, where the encoding is made tractable by lemmas from the proof (\cref{sec:ablation}).

\section{Pauli observables, quadrics and the contextuality degree}\label{sec:prelim}

\subsection{The symplectic polar space}
Write $\sigma_{00}=I$, $\sigma_{10}=X$, $\sigma_{01}=Z$, $\sigma_{11}=Y$ for the Pauli matrices.
A vector $v=(x\mid z)\in V=\F_2^{2N}$ labels the Hermitian $N$-qubit Pauli observable
\[P(v)=\sigma_{x_1z_1}\otimes\cdots\otimes\sigma_{x_Nz_N},\]
and every Pauli observable is $\pm P(v)$ for a unique $v$.
For $u,v\in V$ the products satisfy $P(u)P(v)=i^{\varphi(u,v)}P(u+v)$ for an exponent $\varphi(u,v)\in\mathbb Z/4$, which is the sum over the qubits of the single-qubit exponents ($XY=iZ$, $YZ=iX$, $ZX=iY$, and $-i$ for the reversed products).
Consequently $P(u)$ and $P(v)$ commute if and only if the symplectic form
\[B(u,v)=\sum_{j=1}^N\bigl(x_j(u)z_j(v)+z_j(u)x_j(v)\bigr)\]
vanishes, and they anticommute otherwise.
The points of $W(2N-1,2)$ are the nonzero vectors of $V$, that is the $4^N-1$ non-identity observables up to sign, and its lines are the triples $\{u,v,u+v\}$ with $B(u,v)=0$, that is the triples of pairwise commuting observables whose product is $\pm I$.

\subsection{Quadrics and the real observables}
The quadratic forms on $V$ with polar form $B$ are
\[Q_a(v)=Q_0(v)+B(a,v)\quad(a\in V),\qquad Q_0(v)=\sum_{j=1}^N x_j(v)z_j(v),\]
where $Q_0(v)$ counts the $Y$ factors of $P(v)$ modulo $2$.
Since $P(v)^{\mathsf T}=(-1)^{Q_0(v)}P(v)$, the singular points of $Q_0$ are exactly the \emph{real} (symmetric) Pauli observables~\cite{SanigaLevayPracna12}.
The form $Q_0$ is hyperbolic, an orthogonal sum of $N$ hyperbolic planes, and $Q_a(v)=Q_0(v+a)+Q_0(a)$, so $Q_a$ is hyperbolic when $Q_0(a)=0$ and elliptic when $Q_0(a)=1$.
For four qubits there are $136$ hyperbolic quadrics $\Qp(7,2)$ and $120$ elliptic quadrics $\Qm(7,2)$.
A hyperbolic quadric of $W(7,2)$ has $135$ points and $1575$ lines, with $35$ lines through each point and $3$ points on each line~\cite{HirschfeldThas16}.
The lines of a quadric are the lines of $W(7,2)$ all of whose points are singular; two singular points $p\ne q$ are collinear on the quadric if and only if $B(p,q)=0$, since $Q(p+q)=Q(p)+Q(q)+B(p,q)$.

\subsection{Contexts, signs and the degree}
\begin{definition}
A \emph{configuration} $\mathcal C$ is a set of points of $W(2N-1,2)$ together with a set of lines on them, the \emph{contexts}.
The \emph{sign} of a context $C=\{u,v,u+v\}$ is the bit $s(C)$ with $P(u)P(v)P(u+v)=(-1)^{s(C)}I$.
An \emph{assignment} is a map $x$ from the points to $\F_2$, the value $x(p)$ standing for the outcome $(-1)^{x(p)}$ of $P(p)$.
It \emph{violates} $C$ if $x(u)+x(v)+x(u+v)\neq s(C)$, and the \emph{contextuality degree} $\dg\mathcal C$ is the least number of contexts violated by an assignment~\cite{deBoutray22}.
\end{definition}

A noncontextual hidden-variable model assigns to each observable a fixed outcome, and quantum mechanics requires the outcomes of $P(u),P(v),P(u+v)$ to multiply to $(-1)^{s(C)}$; so $\dg\mathcal C$ is the least number of these constraints that any such model must break.
For the configuration $\mathcal H_a$ of all points and lines of the quadric $Q_a$ we write $\dg\mathcal H_a$.

Two kinds of symmetry leave the degree unchanged.

\begin{lemma}[Re-signing]\label{lem:resign}
Replacing the observable $P(p)$ by $-P(p)$ toggles the sign of every context through $p$. The assignment $x+\delta_p$ (which changes the value at $p$ only) violates, after the replacement, exactly the contexts that $x$ violates before it. Hence the replacement preserves the degree.
\end{lemma}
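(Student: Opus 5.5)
The plan is to work directly with the definitions of sign and violation and track only the contexts through $p$. The first step is the sign change. Replace $P(p)$ by $-P(p)$ and take a context $C=\{u,v,u+v\}$ containing $p$. Exactly one of the three factors in $P(u)P(v)P(u+v)$ flips sign, since $p$ appears once on a line of three distinct points. So the product changes from $(-1)^{s(C)}I$ to $-(-1)^{s(C)}I$, and the new sign is $s'(C)=s(C)+1$. For a context not containing $p$, none of the factors changes, so $s'(C)=s(C)$. This is all the first sentence of the statement asserts. One point should be checked here: the sign is defined through the observables attached to the points, so after the replacement the "observable at $p$" is $-P(p)$, while the other points keep $P(q)$.

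The second step is the violation count. Put $x'=x+\delta_p$. For a context $C$ not through $p$, both the parity $x'(u)+x'(v)+x'(u+v)$ and the sign are unchanged. Hence $x'$ violates $C$ after the replacement exactly when $x$ violates $C$ before it. For a context through $p$, the parity changes by $1$, because exactly one of its points is $p$. The sign also changes by $1$, by the first step. The condition "parity $\ne$ sign" is therefore preserved. So the violated sets coincide.

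The last step is the degree. The map $x\mapsto x+\delta_p$ is an involution, hence a bijection, on assignments. By the second step it carries the violated set of $x$ in the old configuration to the same set in the new one. So the minimum number of violated contexts is the same before and after the replacement, and the degree is preserved.

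I do not expect a genuine obstacle. The only care needed is the remark that a point lies on a context at most once, so that the sign and the parity each flip exactly once.
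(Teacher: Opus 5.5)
Your proof is correct and takes the same approach the paper intends. The paper gives no separate proof of this lemma and treats it as immediate from the definitions of sign and violation. Your three steps are exactly what it leaves to the reader: the scalar $-1$ from the one factor $P(p)$ flips the sign, parity and sign flip together on every context through $p$, and $x\mapsto x+\delta_p$ is a bijection on assignments.
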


\begin{lemma}[Clifford transport]\label{lem:clifford}
Let $U$ be a Clifford unitary, with $UP(v)U^\dagger=(-1)^{\tau(v)}P(gv)$ for a symplectic map $g$ of $V$ and a function $\tau$.
If $C$ is a context of $\mathcal C$, then $gC$ has sign $s(gC)=s(C)+\tau(u)+\tau(v)+\tau(u+v)$.
Consequently $y(gv)=x(v)+\tau(v)$ violates $gC$ exactly when $x$ violates $C$, and $\dg g\mathcal C=\dg\mathcal C$.
\end{lemma}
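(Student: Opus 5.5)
We prove \cref{lem:clifford} by a direct computation with the defining relation of the signs, followed by a bookkeeping argument for the assignments. The first step is the sign formula. Let $C=\{u,v,u+v\}$ be a context, so that $P(u)P(v)P(u+v)=(-1)^{s(C)}I$. Conjugating this identity by $U$ and inserting $U^\dagger U=I$ between the factors gives
\[
UP(u)U^\dagger\,UP(v)U^\dagger\,UP(u+v)U^\dagger=(-1)^{s(C)}I .
\]
Substituting $UP(w)U^\dagger=(-1)^{\tau(w)}P(gw)$ and using linearity of $g$, so that $g(u+v)=gu+gv$, we obtain
\[
(-1)^{\tau(u)+\tau(v)+\tau(u+v)}P(gu)P(gv)P(gu+gv)=(-1)^{s(C)}I .
\]

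Next we check that $gC$ is a context whose sign is well defined. Since $g$ is symplectic, $B(gu,gv)=B(u,v)=0$, so $gC=\{gu,gv,gu+gv\}$ is a line of $W(2N-1,2)$. Its sign is read off from the ordered product $P(gu)P(gv)P(gu+gv)$. This product is unaffected by reordering, because the three factors pairwise commute. Comparing with the display above gives $s(gC)=s(C)+\tau(u)+\tau(v)+\tau(u+v)$. The only subtlety is whether the formula depends on which two points of $C$ are called $u,v$. It does not: the expression $\tau(u)+\tau(v)+\tau(u+v)$ is the sum of $\tau$ over the three points of $C$, so it is symmetric in them.

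For the assignments, define $y$ on the points of $g\mathcal C$ by $y(gw)=x(w)+\tau(w)$. This is well defined because $g$ is a bijection. Then
\[
y(gu)+y(gv)+y(gu+gv)+s(gC)=x(u)+x(v)+x(u+v)+s(C),
\]
since each $\tau$ term appears twice and cancels mod $2$. Hence $y$ violates $gC$ exactly when $x$ violates $C$, and $x\mapsto y$ is a bijection between assignments of $\mathcal C$ and of $g\mathcal C$ that preserves the number of violated contexts. Taking minima gives $\dg g\mathcal C=\dg\mathcal C$.

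The only step needing care is the first one: confirming that the phases from conjugation combine exactly as stated, so that no stray factor of $i$ appears. This holds because $U(\cdot)U^\dagger$ is a multiplicative automorphism and the images are Hermitian Paulis with real signs $\pm1$ by hypothesis.
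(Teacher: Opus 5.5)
Your proof is correct and follows the paper's route: the paper's entire proof is the instruction to conjugate $P(u)P(v)P(u+v)=(-1)^{s(C)}I$ by $U$. You carry out that conjugation explicitly and add the routine checks the paper leaves implicit, namely that the $\tau$-terms cancel in $y$ and that $x\mapsto y$ is a bijection of assignments preserving the violated contexts.
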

\begin{proof}
Conjugate $P(u)P(v)P(u+v)=(-1)^{s(C)}I$ by $U$.
\end{proof}

Every symplectic map of $V$ is induced by a Clifford unitary~\cite{Gottesman98,DehaeneDeMoor03}.
For instance, for $r\in V$ the unitary $(I+iP(r))/\sqrt2$ induces the symplectic transvection $t_r(v)=v+B(v,r)\,r$: it fixes $P(v)$ when $B(v,r)=0$ and sends it to $iP(r)P(v)=\pm P(v+r)$ otherwise.
If $Q_0(r)=0$ then $Q_0(t_rv)=Q_0(v)+B(v,r)=Q_r(v)$, so $t_r$ carries the quadric $Q_r$ onto $Q_0$.
Together with \cref{lem:clifford}, this gives:

\begin{corollary}\label{cor:allquadrics}
All $136$ hyperbolic quadrics of $W(7,2)$ have the same contextuality degree, and so do all $120$ elliptic quadrics.
\end{corollary}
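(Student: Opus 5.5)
The plan is to show that any two hyperbolic quadrics are related by a symplectic map of $V$, and that any two elliptic ones are too, and then to invoke \cref{lem:clifford}. The statement says that such a map, realised by a Clifford unitary, carries the configuration $\mathcal H_a$ onto $\mathcal H_b$ with the same degree. The lemma adjusts the signs of the contexts and the assignment by the phase function $\tau$, so only the underlying point-line structure needs to be transported. Every symplectic map is induced by a Clifford unitary, so it suffices to work with symplectic maps.

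First we reduce each quadric to a base one. For a hyperbolic quadric $Q_r$ with $Q_0(r)=0$ and $r\ne0$, the transvection $t_r$ satisfies $Q_0\circ t_r=Q_r$, as noted just before the statement. Hence $t_r$ maps the singular points of $Q_r$ bijectively onto those of $Q_0$. Since $t_r$ is symplectic, it also maps the lines of $Q_r$, which are the totally singular triples $\{u,v,u+v\}$, onto the lines of $Q_0$. So $t_r\mathcal H_r=\mathcal H_0$, and \cref{lem:clifford} gives $\dg\mathcal H_r=\dg\mathcal H_0$. The quadric $r=0$ is $Q_0$ itself. Counting, there are $136$ vectors $a$ with $Q_0(a)=0$, namely the $135$ singular nonzero vectors together with $a=0$. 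This matches the $136$ hyperbolic quadrics, since $a\mapsto Q_a$ is injective because $B$ is nondegenerate.

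For the elliptic quadrics we fix one base vector $e$ with $Q_0(e)=1$ and aim to show that every $Q_a$ with $Q_0(a)=1$ is carried onto $Q_e$ by a symplectic map. The key identity is $Q_a(t_r v)=Q_a(v)+B(v,r)\bigl(1+Q_a(r)\bigr)$, which follows from expanding $Q_a(v+B(v,r)r)$. Suppose $r$ is singular for $Q_a$, that is $Q_a(r)=Q_0(r)+B(a,r)=0$. Then $Q_a(t_rv)=Q_a(v)+B(v,r)=Q_{a+r}(v)$, so $t_r$ carries $Q_{a+r}$ onto $Q_a$.

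It remains to connect $a$ and $e$ by a chain of such steps. Take $r=a+e$. The singularity condition becomes $Q_0(a+e)+B(a,a+e)=Q_0(a)+Q_0(e)+B(a,e)+B(a,e)=0$, which holds because $Q_0(a)=Q_0(e)=1$. So a single transvection suffices when $a\ne e$.

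For the hyperbolic case the same computation with base $0$ recovers the $t_r$ above. I expect no real obstacle here; the only care needed is to check the identity for $Q_a\circ t_r$ and to confirm that $r$ is nonzero, so that $t_r$ is a genuine transvection.
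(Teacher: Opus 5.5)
Your proof is correct. For the hyperbolic quadrics it is the paper's argument. For the elliptic ones you take a different route.

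The paper simply cites the transitivity of $\Sp(8,2)$ on quadratic forms of a given type with polar form $B$~\cite{Taylor92}. You instead prove the needed transitivity directly, from the identity $Q_a(t_rv)=Q_a(v)+B(v,r)\bigl(1+Q_a(r)\bigr)$ together with the observation $Q_a(a+e)=Q_0(a)+Q_0(e)$. Hence the single transvection $t_{a+e}$ carries $Q_e$ onto $Q_a$ whenever $Q_0(a)=Q_0(e)$.

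Your version gives one uniform, self-contained argument for both types, since the hyperbolic case is $e=0$, and nothing in it depends on $N=4$. It also needs only the explicit Clifford unitary $(I+iP(a+e))/\sqrt2$, not the general fact that every symplectic map is Clifford-induced. That unitary is exactly what you would use to transport a witness assignment concretely, as the paper does for the hyperbolic quadrics in \cref{thm:main}.

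The paper's citation is shorter and places the statement within the classical classification of quadratic forms, but it leaves the transporting map implicit.
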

\begin{proof}
For the hyperbolic quadrics, $t_r$ with $Q_0(r)=0$ carries $Q_r$ to $Q_0$.
For the elliptic ones, $\Sp(8,2)$ is transitive on the quadratic forms of a given type with polar form $B$~\cite{Taylor92}.
\end{proof}

The $136$ hyperbolic quadrics fall into three classes by their number of negative lines ($532$, $604$ and $612$)~\cite{Muller23}; re-signing (\cref{lem:resign}) and Clifford transport (\cref{lem:clifford}) change this number without changing the degree, so it plays no role below.

\section{Grids, doilies and their degree}\label{sec:doily}

\begin{lemma}[Odd sets]\label{lem:odd}
Let $\mathcal F$ be a family of contexts in which every point occurs an even number of times and whose signs have odd sum. Then every assignment violates an odd number, and in particular at least one, of the contexts of $\mathcal F$.
\end{lemma}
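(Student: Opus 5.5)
The plan is a parity count over $\F_2$. I read $\mathcal F$ as a multiset of contexts, so a context may occur several times; this is needed later for the grids covering each line of a doily. Fix an assignment $x$. For each $C=\{u,v,u+v\}\in\mathcal F$ define the defect bit
\[e(C)=x(u)+x(v)+x(u+v)+s(C)\in\F_2 .\]
By the definition of violation, $x$ violates $C$ exactly when $e(C)=1$. Hence the number of violated contexts of $\mathcal F$, counted with multiplicity, is congruent modulo $2$ to $\sum_{C\in\mathcal F}e(C)$.

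Next I split this sum into two parts:
\[\sum_{C\in\mathcal F}e(C)=\sum_{C\in\mathcal F}\sum_{p\in C}x(p)\;+\;\sum_{C\in\mathcal F}s(C).\]
Exchanging the order of summation, the first part becomes $\sum_p m(p)\,x(p)$, where $m(p)$ is the number of occurrences of $p$ in the members of $\mathcal F$. By hypothesis every $m(p)$ is even, so the first part vanishes in $\F_2$. The second part is $1$, again by hypothesis.

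Therefore $\sum_{C}e(C)=1$, so the number of violated contexts is odd. In particular it is nonzero, which gives the claim.

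There is no real obstacle here. The only care needed is to count with multiplicity whenever a context appears more than once in $\mathcal F$. The multiplicity should also enter $m(p)$ consistently: an occurrence of $p$ in a repeated context counts once for each copy of that context.
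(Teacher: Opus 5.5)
Your proof is correct and is the same parity count the paper gives: the number of violated contexts is, modulo $2$, the sum of the defect bits, where the point terms cancel because each point occurs an even number of times and the sign terms contribute $1$. The multiset reading is harmless, but the paper never needs it. The lemma is applied to the six distinct lines of each grid separately, and the fourfold covering of lines enters only the separate count $\sum_G|V_D\cap G|=4|V_D|$.
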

\begin{proof}
Modulo $2$, the number of violated contexts is $\sum_{C\in\mathcal F}\bigl(\sum_{p\in C}x(p)+s(C)\bigr)=\sum_C\sum_{p\in C}x(p)+\sum_Cs(C)=0+1$.
\end{proof}

A \emph{grid} is a generalised quadrangle $GQ(2,1)$: nine points $p_{ij}$ ($i,j\in\{1,2,3\}$) and six lines, the three rows $\{p_{i1},p_{i2},p_{i3}\}$ and the three columns $\{p_{1j},p_{2j},p_{3j}\}$. The rows form one \emph{ruling} and the columns the other.

\begin{lemma}[Peres--Mermin parity]\label{lem:mermin}
Let nine Pauli observables $P_{ij}$ form a grid in which every row and every column is a context, and in which $P_{ij}$ and $P_{kl}$ anticommute whenever $i\ne k$ and $j\ne l$. Then the six contexts have signs of odd sum.
\end{lemma}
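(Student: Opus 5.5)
We compute the product of all six context operators in two different orders and compare; the phase discrepancy will be $(-1)^{\sum s}$, and the anticommutation hypothesis will force it to be $-1$.

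Write $R_i=P_{i1}P_{i2}P_{i3}=(-1)^{r_i}I$ for the rows and $C_j=P_{1j}P_{2j}P_{3j}=(-1)^{c_j}I$ for the columns. Within a row or a column the observables commute, so these products do not depend on the order of the factors, and the sum of the six signs is $\sum_i r_i+\sum_j c_j$.

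\textbf{Two expressions.} Multiplying the rows in order gives
\[
(-1)^{\sum_i r_i}I=R_1R_2R_3=P_{11}P_{12}P_{13}P_{21}P_{22}P_{23}P_{31}P_{32}P_{33},
\]
and multiplying the columns in order gives
\[
(-1)^{\sum_j c_j}I=C_1C_2C_3=P_{11}P_{21}P_{31}P_{12}P_{22}P_{32}P_{13}P_{23}P_{33}.
\]
Both are words in the same nine involutions. Every pair of these involutions either commutes or anticommutes, so the two words differ by the sign $(-1)^k$. Here $k$ is the number of pairs whose relative order must be swapped to turn the row-major word into the column-major word, restricted to the anticommuting pairs.

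\textbf{Counting the inversions.} A pair $(p_{ij},p_{kl})$ with $i<k$ (so $p_{ij}$ comes first in the row-major order) is inverted in the column-major order exactly when $l<j$. So the inverted pairs are those with $i<k$ and $j>l$. In particular $i\ne k$ and $j\ne l$, so all of these pairs anticommute by hypothesis. The pairs sharing a row or a column are never inverted, so their commutation is irrelevant. The number of inverted pairs is the number of choices of two distinct rows and two distinct columns, namely $\binom32\binom32=9$. Hence $k=9$, which is odd.

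\textbf{Conclusion.} It follows that $R_1R_2R_3=-C_1C_2C_3$, that is, $(-1)^{\sum r_i}=-(-1)^{\sum c_j}$. Therefore $\sum_i r_i+\sum_j c_j$ is odd, which is the claim.

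The only point that needs care is the inversion count: one must check that every swap actually performed involves an anticommuting pair. The characterisation of the inverted pairs as those with $i<k$ and $j>l$ settles this, since such a pair lies in distinct rows and distinct columns.
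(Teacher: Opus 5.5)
Your proposal is correct and is essentially the paper's own proof. Both compare the row-major product $(-1)^{\sum r_i}I$ with the column-major product $(-1)^{\sum c_j}I$, and both observe that the inverted pairs are exactly the nine pairs with $i<k$ and $j>l$, each of which anticommutes, so the two products differ by $(-1)^9$.
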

\begin{proof}
Let $r_i$ and $c_j$ be the signs of the rows and the columns, so that the products of the rows and the columns are $(-1)^{r_i}I$ and $(-1)^{c_j}I$.
The product $M=P_{11}P_{12}P_{13}P_{21}\cdots P_{33}$ in row-major order equals $(-1)^{r_1+r_2+r_3}I$.
Reordering it into column-major order $P_{11}P_{21}P_{31}P_{12}\cdots P_{33}$ exchanges the relative order of $P_{ij}$ and $P_{kl}$ exactly when $i<k$ and $j>l$.
There are nine such pairs and each anticommutes, so the column-major product is $(-1)^9M$.
It also equals $(-1)^{c_1+c_2+c_3}I$, so $\sum r_i+\sum c_j$ is odd.
\end{proof}

By \cref{lem:odd} (each point lies on one row and one column), every assignment violates an odd number of the six lines of such a grid.
This is the familiar argument for the Peres--Mermin square~\cite{Peres90,Mermin90}; it also shows that every grid of $W(2N-1,2)$ is contextual~\cite{Holweck17}.

A \emph{doily} of a quadric $Q$ of $W(2N-1,2)$ is the configuration of singular points and lines of $Q$ inside a $5$-dimensional subspace $U\le V$ on which $Q$ is nondegenerate. Since $\dim U$ is odd and the characteristic is $2$, the restriction of $B$ to $U$ is always degenerate; nondegenerate means that its radical $U\cap U^\perp$ is one-dimensional, spanned by a vector $r$ with $Q(r)=1$, so that $Q$ is a parabolic form on $U$.
It is a parabolic quadric $\mathcal Q(4,2)$, a generalised quadrangle $GQ(2,2)$ with $15$ points and $15$ lines~\cite{PayneThas09,HirschfeldThas16}.

\begin{lemma}\label{lem:collinear}
Let $D$ be a doily of a quadric and $G\subseteq D$ a grid (nine points and six lines of $D$ forming a $GQ(2,1)$). Two points of $G$ that are not collinear in $G$ anticommute.
\end{lemma}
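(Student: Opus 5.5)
The plan is to show that two points of $D$ that are not collinear in $D$ anticommute, and then to show that two points of $G$ that are not collinear in $G$ are not collinear in $D$.

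\emph{Step 1: points of $D$.} Let $p,q$ be distinct points of $D$. Both are singular points of $Q$. As recalled in \cref{sec:prelim}, two singular points are collinear on the quadric exactly when $B(p,q)=0$, because $Q(p+q)=Q(p)+Q(q)+B(p,q)=B(p,q)$. Now $p,q\in U$ and $U$ is a subspace, so $p+q\in U$. Hence if $B(p,q)=0$, the line $\{p,q,p+q\}$ is a totally singular line of $Q$ lying inside $U$, and so it is a line of $D$. Contrapositively, if $p$ and $q$ are not collinear in $D$, then $B(p,q)=1$, which means that $P(p)$ and $P(q)$ anticommute.

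\emph{Step 2: points of $G$.} Take two points of $G$ that are not collinear in $G$, say $p_{ij}$ and $p_{kl}$ with $i\ne k$ and $j\ne l$. We must rule out that they are collinear in $D$. Suppose some line $\ell$ of $D$ contained both. The row through $p_{ij}$ meets the column through $p_{kl}$ in the point $p_{il}\in G$. This point is collinear in $D$ with both $p_{ij}$ and $p_{kl}$, and it is not on $\ell$. In that case $p_{ij}$, $p_{kl}$, $p_{il}$ form a triangle in $D$. But $D$ is a generalised quadrangle $GQ(2,2)$, which contains no triangles: a point $p_{il}$ off the line $\ell$ is collinear with exactly one point of $\ell$. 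This contradiction shows that $p_{ij}$ and $p_{kl}$ are not collinear in $D$, and Step 1 then shows that they anticommute.

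\emph{Main obstacle.} The only delicate point is Step 1, namely checking that $B(p,q)=0$ for $p,q\in D$ really forces a line of $D$ and not merely a line of $Q$ elsewhere. This holds because $U$ is closed under addition, so the third point $p+q$ also lies in $U$.
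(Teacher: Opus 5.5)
Your proof is correct and is essentially the paper's. Step~1 is the same observation that $p+q\in U$ turns $B(p,q)=0$ into a line of $D$, and Step~2 rules out the same triangle $p_{ij},p_{kl},p_{il}$ by the quadrangle axiom. The paper applies the axiom to the vertex $p_{ij}$ and the grid line through $p_{kl}$, where non-incidence is immediate. Your version instead needs the easy but unstated fact that $p_{il}\notin\ell$: otherwise $\ell$ would share the two points $p_{ij},p_{il}$ with row $i$ and so equal it, yet $p_{kl}$ is not on row $i$.
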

\begin{proof}
Let $p,q\in G$ be non-collinear in $G$.
Two points of $D$ commute if and only if they are collinear in $D$, because $p+q\in U$ and $B(p,q)=0$ means that $\{p,q,p+q\}$ is a line of the quadric.
Suppose $p,q$ are collinear in $D$.
Let $M$ be a line of $G$ through $q$; then $p\notin M$.
In the grid, $p$ is collinear with exactly one point $m\ne q$ of $M$.
So in $D$ the point $p$ is collinear with two points $q,m$ of the line $M$, contradicting the axiom of a generalised quadrangle.
\end{proof}

\begin{lemma}\label{lem:sylvester}
A doily contains ten grids such that every line of the doily lies in exactly four of them and any two intersecting lines lie in exactly two of them.
\end{lemma}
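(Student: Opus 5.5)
We will work in Sylvester's model of $GQ(2,2)$. The abstract observation is that the doily is unique up to isomorphism, so it suffices to exhibit ten grids with the stated properties in one concrete model. The points are the $15$ two-element subsets (duads) of $\{1,\dots,6\}$. The lines are the $15$ synthemes, that is the partitions of $\{1,\dots,6\}$ into three duads, and incidence is containment. We would first recall, or check in one line, that this is a $GQ(2,2)$ and hence isomorphic to every doily.

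For the grids we take the ten partitions of $\{1,\dots,6\}$ into two triples $\{T,T'\}$. Given such a partition, let $G_{T,T'}$ consist of the nine duads meeting both $T$ and $T'$, and of the six synthemes made of three such duads. Write $T=\{a_1,a_2,a_3\}$ and $T'=\{b_1,b_2,b_3\}$. The duad $\{a_i,b_j\}$ becomes the cell $(i,j)$ of a $3\times 3$ array. A syntheme using only crossing duads is then a perfect matching between $T$ and $T'$, i.e.\ a permutation in $S_3$, which gives six lines. The three even permutations form one ruling and the three odd ones the other. Two permutations of the same parity share no cell, and two of opposite parity share exactly one cell. Every cell lies on exactly one even and one odd permutation. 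So the nine points and six lines form a $GQ(2,1)$, each ruling partitions the nine points, and $G_{T,T'}$ is a grid.

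Next come the counts. A syntheme $\{d_1,d_2,d_3\}$ lies in $G_{T,T'}$ exactly when $T$ is a transversal of its three duads, that is, $T$ picks one element from each. There are $2^3=8$ transversals, and they come in complementary pairs $T,T'$. So each line lies in exactly $4$ of the ten grids. Two intersecting lines share one duad $d=\{u,v\}$, and their other duads partition the remaining four elements in two different ways. The grid $\{T,T'\}$ contains both lines iff $T$ separates $u,v$ and also separates both pairs of each of the two matchings of the remaining four. Two distinct perfect matchings of a $4$-set together form a $4$-cycle. A $2$-coloring that is proper on a $4$-cycle has exactly $2$ choices, so there are two choices for the remaining four elements. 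There are also $2$ choices for how $T$ splits $\{u,v\}$. This gives $4$ transversals, hence $2$ unordered partitions, hence exactly two grids.

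We expect no serious obstacle. The only point requiring care is transporting the construction to an arbitrary doily of a quadric. This relies on the uniqueness of $GQ(2,2)$, which we cite from \cite{PayneThas09}. We also need that sub-$GQ(2,1)$'s are exactly grids in the sense of \cref{lem:collinear}, and this holds automatically under an isomorphism. As a sanity check, counting line incidences gives $10\cdot 6=15\cdot 4$, consistent with the first count.
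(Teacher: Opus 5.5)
Your proposal is correct and follows essentially the same route as the paper. Both use Sylvester's model with one grid per partition of $\{1,\dots,6\}$ into two triples, the even and odd bijections as the two rulings, and transversal counting that gives $2^3/2=4$ grids per line and two per pair of intersecting lines. Your $4$-cycle $2$-colouring argument for the last count is a compact repackaging of the paper's explicit case analysis of $\{ab,cd,ef\}$ and $\{ab,ce,df\}$.
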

\begin{proof}
$GQ(2,2)$ is unique up to isomorphism and has Sylvester's model~\cite{PayneThas09} (\cref{fig:doily}): the points are the $15$ two-element subsets (duads) of $\{1,\dots,6\}$ and the lines are the $15$ partitions of $\{1,\dots,6\}$ into three duads (synthemes), with inclusion as incidence.
For a partition $\{A,B\}$ of $\{1,\dots,6\}$ into two triples, the $9$ duads meeting both $A$ and $B$ correspond to $A\times B$, and the synthemes made of such duads to the $6$ bijections $A\to B$.
Two bijections $\pi,\rho$ share $|\{a:\pi(a)=\rho(a)\}|$ duads. Identifying $A$ and $B$ with $\{1,2,3\}$, this is $0$ when $\rho^{-1}\pi$ is a $3$-cycle and $1$ when it is a transposition.
So the even bijections are three pairwise disjoint lines covering the nine duads, and so are the odd ones, and every even bijection meets every odd one in exactly one duad: the nine duads and six synthemes form a grid.
This gives ten grids, one for each of the $\binom63/2=10$ partitions.
A syntheme $\{ab,cd,ef\}$ lies in the grid of $\{A,B\}$ exactly when each of its duads meets both $A$ and $B$, which leaves $2^3/2=4$ partitions.
Two synthemes sharing the duad $ab$, say $\{ab,cd,ef\}$ and $\{ab,ce,df\}$, both lie in the grid of $\{A,B\}$ iff $a,b$ are separated, $c$ is separated from $d$ and $e$, and $f$ from $d$ and $e$, which leaves the two partitions $\{\{a,c,f\},\{b,d,e\}\}$ and $\{\{b,c,f\},\{a,d,e\}\}$.
\end{proof}

\begin{figure}[ht]
\centering
\begin{tikzpicture}[scale=0.95,
 line/plain/.style={black!35,line width=0.6pt},
 line/grid/.style={blue!70!black,line width=1.1pt},
 line/viol/.style={red!80!black,line width=1.8pt,dash pattern=on 4pt off 2pt},
 pt/grid/.style={circle,fill=blue!70!black,inner sep=0pt,minimum size=5.5pt},
 pt/plain/.style={circle,fill=white,draw=black!60,inner sep=0pt,minimum size=5pt}]
\draw[line/plain] (2.853,0.927) -- (0.000,3.000);
\draw[line/plain] (0.999,-0.324) arc[start angle=39.19, end angle=-219.19, radius=1.288];
\draw[line/plain] (1.763,-2.427) -- (-1.763,-2.427);
\draw[line/plain] (1.427,1.964) -- (-1.763,-2.427);
\draw[line/plain] (-1.763,-2.427) -- (-2.853,0.927);
\draw[line/plain] (-0.617,0.849) arc[start angle=-176.81, end angle=-435.19, radius=1.288];
\draw[line/plain] (-2.853,0.927) -- (0.000,3.000);
\draw[line/plain] (0.000,-2.427) -- (0.000,3.000);
\draw[line/plain] (2.308,-0.750) -- (-2.853,0.927);
\draw[line/grid] (0.617,0.849) arc[start angle=111.19, end angle=-147.19, radius=1.288];
\draw[line/grid] (-1.427,1.964) -- (1.763,-2.427);
\draw[line/grid] (-2.308,-0.750) -- (2.853,0.927);
\draw[line/viol] (0.617,0.849) arc[start angle=-3.19, end angle=255.19, radius=1.288];
\draw[line/viol] (2.853,0.927) -- (1.763,-2.427);
\draw[line/viol] (-0.617,0.849) arc[start angle=68.81, end angle=327.19, radius=1.288];
\node[pt/plain] at (0.000,3.000) {};
\node[font=\scriptsize] at (0.000,3.300) {12};
\node[pt/grid] at (2.853,0.927) {};
\node[font=\scriptsize] at (3.138,1.020) {34};
\node[pt/grid] at (1.763,-2.427) {};
\node[font=\scriptsize] at (1.940,-2.670) {15};
\node[pt/plain] at (-1.763,-2.427) {};
\node[font=\scriptsize] at (-1.940,-2.670) {23};
\node[pt/plain] at (-2.853,0.927) {};
\node[font=\scriptsize] at (-3.138,1.020) {45};
\node[pt/plain] at (1.427,1.964) {};
\node[font=\scriptsize] at (1.603,2.206) {56};
\node[pt/grid] at (0.617,0.849) {};
\node[font=\scriptsize] at (0.782,1.076) {14};
\node[pt/grid] at (2.308,-0.750) {};
\node[font=\scriptsize] at (2.594,-0.843) {26};
\node[pt/plain] at (0.999,-0.324) {};
\node[font=\scriptsize] at (1.265,-0.411) {13};
\node[pt/plain] at (0.000,-2.427) {};
\node[font=\scriptsize] at (0.000,-2.727) {46};
\node[pt/grid] at (0.000,-1.050) {};
\node[font=\scriptsize] at (0.000,-1.330) {35};
\node[pt/grid] at (-2.308,-0.750) {};
\node[font=\scriptsize] at (-2.594,-0.843) {16};
\node[pt/grid] at (-0.999,-0.324) {};
\node[font=\scriptsize] at (-1.265,-0.411) {25};
\node[pt/grid] at (-1.427,1.964) {};
\node[font=\scriptsize] at (-1.603,2.206) {36};
\node[pt/grid] at (-0.617,0.849) {};
\node[font=\scriptsize] at (-0.782,1.076) {24};
\end{tikzpicture}
\caption{A doily in Sylvester's model: the points are the duads $ij$ of $\{1,\dots,6\}$ and the lines the synthemes (\cref{lem:sylvester}). The filled points and the coloured lines form the grid of the partition $\{1,2,3\}\mid\{4,5,6\}$; the three dashed lines are one of its rulings. By \cref{prop:ruling}, an optimal assignment violates exactly such a ruling in every doily of the quadric.}
\label{fig:doily}
\end{figure}

\begin{remark}
These ten are in fact all the grids of the doily, but only the stated multiplicities are used below. We also checked, on all $40\,320$ doilies of the four-qubit hyperbolic quadric, that each has exactly ten grids with these multiplicities (\cref{app:computations}).
\end{remark}

\begin{proposition}\label{prop:doily}
Every doily of a quadric of $W(2N-1,2)$, with the signs inherited from the Pauli operators, has contextuality degree at least $3$. If an assignment violates exactly three of its lines, then they are pairwise disjoint and form a ruling of one of its grids, and each of the other nine grids contains exactly one of them.
\end{proposition}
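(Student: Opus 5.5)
The plan is a double count of violated lines over the ten grids of \cref{lem:sylvester}, followed by an analysis of the equality case. Fix a doily $D$ of a quadric and an assignment $x$, and let $S$ be the set of lines of $D$ that $x$ violates.

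\emph{Every grid contributes an odd number.} For each of the ten grids $G$ of \cref{lem:sylvester}, \cref{lem:collinear} shows that non-collinear points of $G$ anticommute. Hence \cref{lem:mermin} applies and the six lines of $G$ have signs of odd sum. By \cref{lem:odd}, $x$ violates an odd number, hence at least one, of the lines of $G$.

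\emph{Counting incidences.} Count pairs (violated line, grid containing it). Each line lies in exactly four grids, so
\[4|S|=\sum_G|S\cap G|\ge 10.\]
Thus $|S|\ge 3$, since $|S|$ is an integer. This proves the degree bound.

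\emph{Equality case.} Suppose $|S|=3$. Then $\sum_G|S\cap G|=12$, with ten odd summands each at least $1$. So either exactly one grid $G_0$ has $|S\cap G_0|=3$ and the other nine have exactly $1$, or (a priori) some other distribution of odd numbers summing to $12$ occurs. But ten odd numbers summing to $12$ must be nine $1$'s and one $3$, because a $5$ would already force a sum of at least $14$.

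Next I use the pair count to show the three lines are pairwise disjoint. If two lines of $S$ met, they would lie together in exactly two grids by \cref{lem:sylvester}. Now count pairs (grid, unordered pair of lines of $S$ in it). This count equals $\sum_G\binom{|S\cap G|}{2}=\binom32=3$, coming entirely from $G_0$. It also equals $\sum_{\{\ell,\ell'\}\subseteq S}(\text{number of grids containing both})$.

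For disjoint lines I need that multiplicity too. Disjoint synthemes $\{ab,cd,ef\}$ and $\{ac,be,df\}$ share no duad; I would check in Sylvester's model how many triple-partitions separate all six duads. The expectation is that this is either $0$ or $1$. In either case, in $G_0$ all three pairs of $S$ lie in a common grid. Two meeting lines would lie in two common grids, contributing at least $2$ to a count in which each pair contributes exactly $1$ (each pair lies in $G_0$, and the total is $3$ for $3$ pairs). That is a contradiction, so the lines are pairwise disjoint.

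Finally, three pairwise disjoint lines of the grid $G_0$ must be one ruling. A row and a column of a grid always meet, so three disjoint lines are all rows or all columns. The other nine grids contain exactly one line of $S$ each, as shown above.

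\emph{Main obstacle.} The delicate step is the disjointness argument. I would first try the pair count above; it only requires knowing that two intersecting lines share exactly two grids, which \cref{lem:sylvester} gives. Intersecting pairs then contribute $2$ each, while the total over the three pairs is $3$, and every pair lies in $G_0$ and so contributes at least $1$. This already rules out any intersecting pair, without computing the disjoint multiplicity.
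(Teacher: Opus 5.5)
Your proof is correct and follows the paper's argument. Each of the ten grids meets the violated set in an odd number of lines, the incidence count gives $4|S|\ge 10$, equality forces nine $1$'s and one $3$, and disjointness follows because an intersecting pair would lie in two grids each containing three violated lines; your pair count $\sum_G\binom{|S\cap G|}{2}=3$ is a repackaging of that step, and you can drop the speculative paragraph on disjoint pairs, since that multiplicity (which is in fact $1$) is never used.
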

\begin{proof}
Let $x$ be an assignment and $V_D$ the set of violated lines of the doily $D$.
Let $G$ run over the ten grids of \cref{lem:sylvester}.
By \cref{lem:collinear,lem:mermin,lem:odd}, every $G$ meets $V_D$ in an odd number of lines, and by \cref{lem:sylvester}, $\sum_G|V_D\cap G|=4|V_D|$.
The left side is a sum of ten odd numbers, so $4|V_D|\ge10$ and $|V_D|\ge3$.
If $|V_D|=3$, the ten odd numbers sum to $12$, so exactly one grid $G^*$ contains three violated lines and the other nine contain one each.
Two intersecting violated lines would lie in two common grids (\cref{lem:sylvester}), each then containing at least two, hence three, violated lines; so the three violated lines are pairwise disjoint.
They lie in $G^*$, and three pairwise disjoint lines of a grid form a ruling.
\end{proof}

The bound is attained: every doily has degree exactly $3$~\cite[Prop.~1]{Muller22doilies}, \cite{deBoutray22}, which we also checked by brute force over the $2^{15}$ assignments of each doily of the four-qubit hyperbolic quadric.

\section{The main theorem}\label{sec:main}

\begin{theorem}\label{thm:main}
Every hyperbolic quadric $\Qp(7,2)$ of the four-qubit space, with its $1575$ lines as contexts, has contextuality degree exactly $315$.
\end{theorem}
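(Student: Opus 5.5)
By \cref{cor:allquadrics} it suffices to treat one hyperbolic quadric, say $Q_0$, the real observables. The proof splits into a lower bound, obtained by averaging \cref{prop:doily} over all doilies, and an upper bound, obtained from an explicit assignment.

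\textbf{Lower bound.} Fix an assignment $x$ and let $V_x$ be its set of violated lines. Double count the pairs $(D,\ell)$ with $D$ a doily of the quadric and $\ell\in V_x$ a line of $D$.
\begin{enumerate}[leftmargin=*]
\item By \cref{prop:doily}, each doily contributes at least $3$, so the count is at least $3\,\#\{\text{doilies}\}$.
\item Each line lies in the same number $\lambda$ of doilies. This holds because $\GO^+(8,2)$, the stabiliser of $Q_0$ in $\Sp(8,2)$, is transitive on the lines of the quadric by Witt's theorem, and it permutes the doilies.
\item Hence $\lambda|V_x|\ge 3\,\#\{\text{doilies}\}$. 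Counting the same incidences for the full set of lines gives $1575\lambda=15\,\#\{\text{doilies}\}$.
\item Combining, $|V_x|\ge 3\cdot1575/15=315$.
\end{enumerate}
The transitivity claim, and the fact that the number of doilies through a line is independent of the line, is the step that needs care. Witt's theorem extends any isometry between totally singular $2$-spaces to an isometry of $(V,Q_0)$. A doily is determined by a nondegenerate $5$-space $U$; the orthogonal group acts on such subspaces, and it maps the doilies through $\ell$ bijectively onto the doilies through $g\ell$. Note that the signs play no role in the counting, since \cref{prop:doily} holds for every doily with its inherited signs. Alternatively, $\lambda$ could be computed directly: count the nondegenerate $5$-spaces $U\supseteq\ell$, which amounts to choosing a suitable $3$-dimensional complement inside $\ell^\perp/\ell$ modulo data. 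Transitivity avoids this computation.

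\textbf{Upper bound.} Exhibit an assignment violating exactly $315$ lines. One option is simply to cite the heuristic assignment of \cite{Muller24} and verify it by direct computation over the $1575$ lines, which would be kernel-checked. A more conceptual construction uses the dual polar space. Via the Lagrangian Grassmannian correspondence, the points of $\Qp(7,2)$ correspond to the $135$ planes of $W(5,2)$, and a $DW(5,2)$ sits in the quadric with $315$ lines. One would then define $x$ so that a line is violated exactly when it belongs to this embedded $DW(5,2)$. This requires showing that the sign function, modified by the coboundary of $x$, equals the indicator of the $DW(5,2)$ lines. Equivalently, that indicator must lie in the coset $s+\mathrm{im}(\partial)$, which is an $\F_2$-linear condition.

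I expect the main obstacle to be proving the upper bound by hand. The lower bound is clean once transitivity is in place. For the construction, I would first try to verify the coset condition using \cref{lem:odd}: check that the $DW(5,2)$ indicator has the correct parity on a spanning set of cycles, such as the grids and doilies. If that proves unwieldy, I would fall back on explicit computation with a certificate assignment.
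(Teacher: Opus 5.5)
Your argument is correct and is essentially the paper's. The lower bound is the same double count of violated lines over all doilies, using Witt's theorem to make $\GO(Q)$ transitive on lines so that every line lies in the same number $\lambda$ of doilies. The upper bound is an explicit assignment checked on all $1575$ lines; the paper supplies one in its appendix and transfers it to the other quadrics by \cref{cor:allquadrics}, so the $DW(5,2)$ construction you sketch is not needed.

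The only step you leave implicit is $\lambda\ge1$, which you need in order to divide by $\lambda$. It holds because the quadric has at least one doily: the paper exhibits one, and any parabolic section $\mathcal Q(4,2)$ of a nondegenerate $6$-dimensional subspace of type $+$ would also do.
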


\begin{proof}
\emph{Lower bound.}
Let $Q$ be a hyperbolic quadric and $\mathcal D$ the set of its doilies.
The orthogonal group $\GO(Q)$ maps doilies to doilies, and it acts transitively on the lines of $Q$ by Witt's theorem~\cite{Taylor92}, since any linear isometry between two totally singular $2$-spaces extends to $V$.
The set $\mathcal D$ is not empty (an example is given in \cref{app:witness}); so every line lies in the same number $r\ge1$ of doilies, and counting incidences gives $15|\mathcal D|=1575\,r$.
For an assignment $x$ with violated set $\mathcal V$, \cref{prop:doily} gives
\[r\,|\mathcal V|=\sum_{D\in\mathcal D}|\mathcal V\cap D|\ \ge\ 3|\mathcal D|,\qquad\text{so}\qquad|\mathcal V|\ \ge\ \frac{3|\mathcal D|}{r}=\frac{3\cdot1575}{15}=315.\]
(In fact $|\mathcal D|=40\,320$ and $r=384$.)

\emph{Upper bound.}
The assignment of \cref{app:witness} violates exactly $315$ of the $1575$ lines of the real quadric $Q_0$.
Its existence was first shown by Muller \emph{et al.}~\cite{Muller24}; ours was found independently by local search.
By \cref{cor:allquadrics}, every hyperbolic quadric has an assignment with $315$ violations; explicitly, for $Q_0(r)=0$ the assignment $y(v)=x(t_rv)+\tau(v)$ of \cref{lem:clifford} works for $Q_r$, which we also checked for all $135$ such $r$.
\end{proof}

\begin{remark}[The role of the doily]
Averaging with grids instead of doilies gives only $|\mathcal V|\ge1575/6$, that is $\dg\ge263$, because a grid can have a single violated line out of six.
The doily gains from the rounding $\lceil10/4\rceil=3$.
Its violated fraction $3/15=1/5$ is exactly the violated fraction $315/1575$ of the quadric, so the lower bound is tight.
\end{remark}

\begin{remark}[Averaging over three-qubit sections]
The quadric $Q_0$ has $120$ parabolic hyperplane sections $\mathcal Q(6,2)$, the sections by $a^\perp$ with $Q_0(a)=1$; each is a copy of $W(5,2)$ with $315$ lines, and each line of $Q_0$ lies in $24$ of them.
If each section, with its inherited signs, has degree $63$, which follows from~\cite{Saniga24hex} together with~\cite[Thm.~5]{MullerGiorgetti24}, the same averaging gives $\dg\ge120\cdot63/24=315$.
This route was implicitly available from~\cite[\S5.1.2]{Muller24}, which notes the $120$ sections, but it was not drawn there. Since the degree $63$ of $W(5,2)$ is itself obtained by averaging over doilies~\cite[\S4.1]{Saniga24hex}, it is the doily argument in two stages; the direct argument needs only \cref{prop:doily}.
\end{remark}

\section{The optimal assignments}\label{sec:optimal}

Equality in the averaging argument constrains every optimal assignment locally.

\begin{proposition}\label{prop:ruling}
Let $x$ be an assignment of a hyperbolic quadric $\Qp(7,2)$ violating exactly $315$ lines. Then every one of its $40\,320$ doilies contains exactly three violated lines; they are pairwise disjoint and form a ruling of one of the ten grids of the doily, and each of the other nine grids contains exactly one of them.
Conversely, if $\mathcal F$ is a family of doilies covering every line equally often, an assignment violating exactly three lines of every doily of $\mathcal F$ is optimal.
\end{proposition}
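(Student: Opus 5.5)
The plan is to reuse the averaging identity from the proof of \cref{thm:main} and read off its equality case; the converse is the same computation run with an arbitrary family $\mathcal F$ in place of $\mathcal D$.

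\emph{Forward direction.} Let $\mathcal V$ be the violated set of $x$, with $|\mathcal V|=315$. As in the proof of \cref{thm:main}, every line lies in exactly $r$ doilies, and $15|\mathcal D|=1575\,r$. Summing over doilies gives
\[\sum_{D\in\mathcal D}|\mathcal V\cap D| \;=\; r\,|\mathcal V| \;=\; 315\,r \;=\; \frac{1575\,r}{5} \;=\; 3|\mathcal D|.\]
By \cref{prop:doily} each summand is at least $3$. Since the sum has $|\mathcal D|$ terms and equals $3|\mathcal D|$, every summand equals exactly $3$. The second part of \cref{prop:doily} then applies to each doily. It says the three violated lines are pairwise disjoint, form a ruling of one of the ten grids of \cref{lem:sylvester}, and each of the other nine grids contains exactly one of them.

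The only extra input is the count $|\mathcal D|=40\,320$, which is what lets the statement speak of "its $40\,320$ doilies". I would take this from \cref{app:computations}, or rederive it by counting. That count could be done as nondegenerate $5$-spaces $U$ with the right restriction, or via $r=384$ from the orbit–stabiliser theorem under $\GO^+(8,2)$. Nothing in the argument depends on the exact value; it only needs $|\mathcal D|$ finite and $r$ constant.

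\emph{Converse.} Let $\mathcal F$ be a family of doilies with every line of the quadric in exactly $r_{\mathcal F}\ge1$ members. Counting incidences gives $15|\mathcal F|=1575\,r_{\mathcal F}$. If $x$ violates exactly three lines of each $D\in\mathcal F$, then
\[r_{\mathcal F}\,|\mathcal V| \;=\; \sum_{D\in\mathcal F}|\mathcal V\cap D| \;=\; 3|\mathcal F|,\]
so $|\mathcal V|=3\cdot1575/15=315$. By \cref{thm:main}, $315$ is the degree, so $x$ is optimal.

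The only subtle point is in the converse. The hypothesis "covering every line equally often" must be read with a positive multiplicity, and "family" may allow repetition; both readings are harmless for the counting. There is no real obstacle. The whole content is that the lower bound of \cref{thm:main} is an average of the per-doily bounds of \cref{prop:doily}, so equality in the average forces equality in every term.
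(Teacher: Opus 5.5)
Your proposal is correct and matches the paper's proof: equality in $\sum_{D}|\mathcal V\cap D|\ge 3|\mathcal D|$ forces every doily term to equal $3$, after which \cref{prop:doily} applies, and the converse is the same incidence count $r'|\mathcal V|=3|\mathcal F|$, $15|\mathcal F|=1575\,r'$ with $\mathcal F$ in place of $\mathcal D$. You also state explicitly that the covering multiplicity must be positive, which the paper leaves implicit.
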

\begin{proof}
If $|\mathcal V|=315$, the inequality $\sum_D|\mathcal V\cap D|\ge3|\mathcal D|$ is an equality, and each term is at least $3$, so each term is $3$. Apply \cref{prop:doily}.
Conversely, if $\mathcal F$ covers every line $r'$ times and every doily of $\mathcal F$ has three violated lines, then $r'|\mathcal V|=3|\mathcal F|$ and $15|\mathcal F|=1575\,r'$, so $|\mathcal V|=315$.
\end{proof}

The converse turns the global optimisation problem over $2^{135}$ assignments into local constraints, each on the fifteen lines of a doily, and this makes a complete enumeration easy.

\begin{theorem}[computer-assisted]\label{thm:dw}
For the real quadric $Q_0$ there are exactly $240$ sets of lines that are the violated set of an optimal assignment, and each is the violated set of exactly $256$ assignments, so that there are $61\,440$ optimal assignments.
The $240$ sets form a single orbit of the orthogonal group $\GO^+(8,2)$ acting on the lines.
Each of them, as a point--line geometry, is isomorphic to the dual polar space $DW(5,2)$: it has $315$ lines through all $135$ points, with $7$ lines through every point.
By \cref{lem:clifford} the same holds for every hyperbolic quadric of $W(7,2)$.
\end{theorem}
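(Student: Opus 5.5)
The count of $256$ assignments per violated set comes first and needs no computation. If two assignments $x,x'$ have the same violated set, then $d=x+x'$ satisfies $d(u)+d(v)+d(u+v)=0$ on every line of $Q_0$. Such $d$ form a vector space $K$. Because the quadric is connected and every line is a triangle of commuting singular points, one expects $K$ to be the restrictions of linear functionals, $d(v)=B(w,v)$ for $w\in V$. Then $|K|=2^8=256$, provided distinct $w$ give distinct restrictions; this holds because the singular points span $V$. To prove that every such $d$ is linear, I would observe that $d$ is additive on commuting singular pairs. The singular points span $V$ and the collinearity graph has diameter $2$, so $d$ extends consistently to a linear map; this could also be checked by a rank computation over $\F_2$. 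Hence the number of optimal assignments is $256$ times the number of optimal violated sets.

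For the enumeration I would use the converse of \cref{prop:ruling}. I take $\mathcal F=\mathcal D$, all $40\,320$ doilies, which cover each line $384$ times. Variables are the $135$ point values $x(p)$. Each line's violation indicator is an XOR of three variables with the sign $s(C)$. For every doily I add the constraint ``exactly three violated lines'', which is equivalent to optimality. To speed up the search I would add the redundant clauses from \cref{prop:doily}: two intersecting lines of a doily are never both violated, and the three violated lines form a ruling of a grid. Symmetry breaking comes from the kernel $K$: using $d(v)=B(w,v)$, I can fix the values on a basis of $8$ points whose dual functionals are independent. This picks one representative per violated set. I would then enumerate solutions with blocking clauses on the violated-line indicators until the formula is UNSAT, expecting $240$ solutions. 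The final UNSAT is certified by DRAT.

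Next, to show that the $240$ sets form one orbit, I would generate $\GO^+(8,2)$ from transvections $t_r$ with $Q_0(r)=1$, which preserve $Q_0$. I would compute the orbit of one violated set under the induced action on lines. Its size should be $240$, so the stabiliser has order $|\GO^+(8,2)|/240$. Each orbit element is again a violated set by \cref{lem:clifford}, because $\tau$ re-signs consistently. An orbit of size $240$ inside a set of size $240$ is the whole set.

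To identify the geometry, I would check directly on the representative that every point lies on $7$ violated lines, giving $135\cdot7/3=315$ lines. I would also check that the geometry is near-polygon-like: every point off a line is at distance $\le1$ from a unique nearest point of it. Then I would either build an explicit isomorphism with $DW(5,2)$, whose points are the $135$ planes of $W(5,2)$ and whose lines are the $315$ lines of $W(5,2)$, or invoke the characterisation of dual polar spaces of this order. The explicit isomorphism can be found by a graph-isomorphism check on the collinearity graphs.

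The transfer to every hyperbolic quadric follows from \cref{lem:clifford} and \cref{cor:allquadrics}. I expect the main obstacle to be making the enumeration terminate quickly with a certificate. The first remedy is the redundant ruling clauses.
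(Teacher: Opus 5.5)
Your proposal follows the paper's proof. The gauge group comes from an $\F_2$-rank computation, and the enumeration is a SAT encoding with ``exactly three violated lines'' in every doily of a line-regular family, redundant pair clauses and blocking clauses, with a DRAT-certified final refutation. The orbit is that of one violated set under the reflections $t_r$ with $Q_0(r)=1$, and the geometry is identified by an explicit isomorphism with $DW(5,2)$. Your extra gauge-fixing on a basis of eight singular points, and checking the geometry on a single orbit representative, are sound variants.

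The one step that fails as you first state it is the hand argument that every $d$ with $d(u)+d(v)+d(u+v)=0$ on all lines is linear. Additivity on collinear pairs, spanning and diameter $2$ do not give a consistent linear extension. The same reasoning would apply verbatim to $W(5,2)$, where $Q_0$ satisfies every line relation without being linear. So this step must rest on the rank computation you offer as a fallback (rank $127$, hence a kernel of dimension $8$), exactly as in the paper.
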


\begin{proof}[Proof (computer-assisted)]
\emph{Enumeration.} \Cref{sec:sat} gives a propositional formula whose models are exactly the optimal assignments (\cref{lem:encoding}). Enumerating its models with a SAT solver, each new violated set being excluded by a blocking clause, yields $240$ violated sets, after which the formula is unsatisfiable; that last unsatisfiability is certified by a DRAT proof checked by \texttt{drat-trim} and, converted to LRAT, by the verified checker \texttt{cake\_lpr}~\cite{cakeLPR}. The certificate covers only this unsatisfiability; the correctness of the encoding is \cref{lem:encoding}, proved by hand (see \cref{sec:sat}).
\emph{Gauge.}
Two assignments have the same violated set if and only if their difference $f$ satisfies $f(u)+f(v)+f(u+v)=0$ on every line.
The line--point incidence matrix has rank $127$ over $\F_2$, so these differences form a space of dimension $135-127=8$, namely the restrictions of the $256$ linear functionals on $V$.
Consequently, each realisable violated set comes from exactly $2^8=256$ assignments.
\emph{Orbit.}
The orthogonal group $\GO^+(8,2)$ of $Q_0$ is generated by the $120$ transvections $t_r$ with $Q_0(r)=1$ (reflections), each induced by a Clifford unitary~\cite{Taylor92}.
By \cref{lem:clifford} each maps violated sets of optimal assignments to violated sets of optimal assignments.
Starting from the violated set of our witness, the orbit under these generators has exactly $240$ elements; linear algebra over $\F_2$ confirms that each of them is the violated set of some assignment, independently of the SAT solver, and the orbit coincides with the enumerated list.
Since $|\GO^+(8,2)|=348\,364\,800$, the stabiliser of a violated set has order $1\,451\,520=|\Sp(6,2)|$, the order of the automorphism group of $DW(5,2)$.
\emph{Isomorphism.}
The dual polar space $DW(5,2)$ has as points the $135$ planes (maximal totally isotropic subspaces) of $W(5,2)$ and as lines its $315$ lines~\cite{DeBruyn06}.
For each of the $240$ violated sets, a backtracking search finds
a bijection from the planes of $W(5,2)$ to the points of $Q_0$.
We check that, for every line $\ell$ of $W(5,2)$, the three
planes containing $\ell$ map to a violated line, and that these
images are exactly the $315$ violated lines.
\end{proof}

\Cref{thm:dw} confirms the second half of the conjecture of~\cite{Muller24}: the unsatisfied configurations of the four-qubit hyperbolic quadric are \emph{always} isomorphic to $DW(5,2)$.
Muller \emph{et al.}\ explained why one should expect this: $DW(5,2)$ contains $120$ split Cayley hexagons of order two, as many as the quadric contains three-qubit subspaces $W(5,2)$, and the unsatisfied configurations of $W(5,2)$ are exactly the classically embedded hexagons~\cite{Muller24,Saniga24hex}.

\begin{remark}[Optimal assignments are not quadratic]
The assignments of the form $x(v)=q(v)$, for a quadratic form $q$ on $V$, violate the line $\{u,v,u+v\}$ exactly when $\beta_q(u,v)\ne s(\{u,v,u+v\})$, where $\beta_q$ is the polar form of $q$.
An exhaustive search over all $2^{28}$ alternating forms $\beta_q$ shows that none violates fewer than $532$ lines, the number of negative lines, attained by $x=0$.
So no optimal assignment is quadratic.
\end{remark}

\section{A lower bound for all quadrics}\label{sec:general}

The argument of \cref{sec:main} uses only the doilies of the quadric and the transitivity of its orthogonal group on lines.

\begin{theorem}\label{thm:general}
Let $N\ge3$ and let $\mathcal C$ be a hyperbolic quadric $\Qp(2N-1,2)$, an elliptic quadric $\Qm(2N-1,2)$, or the whole space $W(2N-1,2)$, with all its $L$ lines as contexts. Then $\dg\mathcal C\ge\lceil L/5\rceil$.
\end{theorem}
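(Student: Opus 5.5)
The plan is to copy the averaging argument of \cref{thm:main}, replacing the line-transitivity of $\GO^+(8,2)$ with a counting argument that works for every $N\ge3$ and all three configurations. Let $\mathcal D$ be the set of doilies of $\mathcal C$. For the quadrics these are the doilies defined in \cref{sec:doily}. For $W(2N-1,2)$, take the doilies of the form $\mathcal Q(4,2)$ inside $5$-spaces $U$ with one-dimensional radical $\langle r\rangle$, for a quadratic form on $U$ polarising to $B|_U$ with $Q(r)=1$. Equivalently, these are the $W(3,2)$ subspaces $U/\langle r\rangle$ realised as lines of $W(2N-1,2)$, and \cref{lem:collinear} still applies to them. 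By \cref{prop:doily}, every $D\in\mathcal D$ contains at least three violated lines of any assignment. If every line of $\mathcal C$ lies in the same number $r_0\ge1$ of doilies, then double counting gives
\[
r_0|\mathcal V|\ge3|\mathcal D|
\qquad\text{and}\qquad
15|\mathcal D|=L\,r_0,
\]
so $|\mathcal V|\ge L/5$. Since $|\mathcal V|$ is an integer, $|\mathcal V|\ge\lceil L/5\rceil$.

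The argument needs two ingredients. The first is \emph{transitivity on lines}. For $\Qp(2N-1,2)$ and $\Qm(2N-1,2)$, Witt's theorem shows that $\GO^\pm(2N,2)$ acts transitively on totally singular lines. An isometry between two totally singular $2$-spaces is automatically an isometry of the quadratic form, since $Q$ vanishes on both. It therefore extends to all of $V$. For $W(2N-1,2)$, Witt's theorem for the symplectic form gives transitivity of $\Sp(2N,2)$ on totally isotropic lines. In each case the group permutes $\mathcal D$: for a quadric it preserves $Q$ and hence nondegenerate $5$-sections; for $W$ it acts on pairs $(U,Q_U)$. So $r_0$ is independent of the line.

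The second ingredient is \emph{nonemptiness}, $r_0\ge1$. Given a line $\ell=\langle u,v\rangle$ of a quadric, I would extend it to a subspace $U=\langle u,v,u',v',r\rangle$. Here $u'$ and $v'$ are chosen so that $\langle u,u'\rangle$ and $\langle v,v'\rangle$ are hyperbolic pairs orthogonal to each other. This is possible because $\ell$ lies in a hyperbolic $4$-space $H$ as a totally singular line, which needs Witt index at least $2$. That holds for $\Qp$ when $N\ge2$ and for $\Qm$ when $N\ge3$. Then pick $r\in H^\perp$ with $Q(r)=1$; this requires $\dim H^\perp=2N-4\ge2$, so $N\ge3$. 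Every nondegenerate space of dimension at least $2$ over $\F_2$ has a vector with $Q=1$. On $U=H\oplus\langle r\rangle$ the form is parabolic, the radical of $B|_U$ is $\langle r\rangle$, and $\ell\subseteq U$. For $W(2N-1,2)$ the same construction works, with the extra choice of $Q_U$ so that $\ell$ is singular. One can simply take the restriction of any hyperbolic form $Q_a$ for which $\ell$ is totally singular.

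I expect the main obstacle to be the $W(2N-1,2)$ case. There a single point set carries several quadratic forms, so the same $5$-space gives several doilies, and one must check that the signs of the lines, which are inherited from the Pauli operators, are what \cref{lem:mermin} requires. \Cref{lem:mermin} uses only anticommutation of non-collinear grid points, and \cref{lem:collinear} depends only on the generalised-quadrangle structure of the doily as a sub-geometry of $W$. So \cref{prop:doily} should transfer verbatim to each $(U,Q_U)$. If this indexing proves awkward, I would fall back to using $W(3,2)$ subspaces directly, since a $W(3,2)$ is itself a $GQ(2,2)$. Its lines are lines of $W(2N-1,2)$, and \cref{lem:sylvester,prop:doily} apply unchanged. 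Finally, I would note that $L$ here is the number of lines of $\mathcal C$, and that no divisibility of $L$ by $5$ is needed because of the ceiling.
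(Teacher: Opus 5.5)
Your proposal is correct and is essentially the paper's proof: Witt's theorem gives line-transitivity, \cref{prop:doily} is averaged over the doilies, and for $W(2N-1,2)$ your pairs $(U,Q_U)$ are exactly the paper's ``doilies of all its quadrics'', since every form on $U$ polarising to $B|_U$ is the restriction of some $Q_a$. The only variation is in showing $r_0\ge1$: you prove directly that each line lies in a doily via $U=H\oplus\langle r\rangle$, whereas the paper exhibits one doily inside a nondegenerate $6$-space and lets transitivity do the rest.
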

\begin{proof}
For a quadric $Q$, a nondegenerate $6$-dimensional subspace of type $+$ or $-$ exists for $N\ge3$ (take three of the hyperbolic or elliptic planes of an orthogonal decomposition), and its quadric $\mathcal Q^\pm(5,2)$ contains parabolic sections $\mathcal Q(4,2)$; so $Q$ has doilies.
$\GO(Q)$ is transitive on the lines of $Q$ by Witt's theorem, and \cref{prop:doily} applies to every doily.
The argument of \cref{thm:main} gives $|\mathcal V|\ge 3L/15$.
For $W(2N-1,2)$, use the doilies of all its quadrics: $\Sp(2N,2)$ permutes them and is transitive on the lines.
\end{proof}

For $W(2N-1,2)$ this recovers the bound $(4^N-1)(4^{N-1}-1)/15$ of~\cite[eq.~(11)]{Muller24}, which was obtained by a recursion over subspaces.
The number of lines follows from the fact that the lines of a quadric through a point $p$ correspond to the points of the quadric $p^\perp/p$ of rank one less: $L=n_N\,n_{N-1}/3$, with $n_N=(2^N-\varepsilon)(2^{N-1}+\varepsilon)$ points on $\mathcal Q^\varepsilon(2N-1,2)$.
\Cref{tab:general} compares the bound with the known exact values and with the heuristic upper bounds of~\cite{Muller24}.

\begin{table}[ht]
\centering\small
\begin{tabular}{@{}rlrrrr@{}}
\toprule
$N$ & configuration & points & lines $L$ & $\lceil L/5\rceil$ & known value or upper bound \\
\midrule
3 & $\Qp(5,2)$ & 35 & 105 & 21 & $=21$~\cite{deBoutray22} \\
3 & $\Qm(5,2)$ & 27 & 45 & 9 & $=9$~\cite{deBoutray22} \\
3 & $W(5,2)$ & 63 & 315 & 63 & $=63$~\cite{Muller23,Saniga24hex} \\
4 & $\Qp(7,2)$ & 135 & 1\,575 & 315 & $=315$ (\cref{thm:main}) \\
4 & $\Qm(7,2)$ & 119 & 1\,071 & 215 & $\le315$~\cite{Muller24} \\
4 & $W(7,2)$ & 255 & 5\,355 & 1\,071 & $\le1\,575$~\cite{Muller24} \\
5 & $\Qp(9,2)$ & 527 & 23\,715 & 4\,743 & $\le6\,975$~\cite{Muller24} \\
5 & $\Qm(9,2)$ & 495 & 19\,635 & 3\,927 & $\le7\,087$~\cite{Muller24} \\
6 & $\Qp(11,2)$ & 2\,079 & 365\,211 & 73\,043 & $\le132\,391$~\cite{Muller24} \\
6 & $\Qm(11,2)$ & 2\,015 & 332\,475 & 66\,495 & $\le131\,700$~\cite{Muller24} \\
7 & $\Qp(13,2)$ & 8\,255 & 5\,720\,715 & 1\,144\,143 & $\le2\,331\,191$~\cite{Muller24} \\
7 & $\Qm(13,2)$ & 8\,127 & 5\,458\,635 & 1\,091\,727 & $\le2\,294\,580$~\cite{Muller24} \\
\bottomrule
\end{tabular}
\caption{The doily bound of \cref{thm:general} against the known values. The quadric bounds for $N\ge5$ and for $\Qm(7,2)$ are, to our knowledge, the first lower bounds stated for these configurations.}
\label{tab:general}
\end{table}

The bound is tight in the four cases where the degree is known.
In the other cases the heuristic upper bounds are much larger, between $0.29L$ and $0.42L$.
Whether the doily bound is tight for $\Qp(9,2)$, as it is for $\Qp(5,2)$ and $\Qp(7,2)$, is the natural next question.

\section{Physical consequences}\label{sec:physics}

\paragraph{A noncontextuality inequality with an exact classical bound.}
For a configuration with contexts $C=\{u,v,u+v\}$, consider the sum of the signed correlators
\[\chi=\sum_C(-1)^{s(C)}\bigl\langle P(u)P(v)P(u+v)\bigr\rangle,\]
where each correlator is estimated by measuring the three commuting observables of $C$ jointly or in sequence~\cite{Cabello08,Cabello10}.
Quantum mechanics gives $\chi=N_C$, the number of contexts, in every state, because $(-1)^{s(C)}P(u)P(v)P(u+v)=I$.
A noncontextual model that assigns the outcomes $(-1)^{x(p)}$ satisfies the constraints of the non-violated contexts and violates the others, so $\chi\le N_C-2\dg\mathcal C$, and mixtures of such models do no better~\cite{Kelleher24}.

\begin{corollary}\label{cor:inequality}
For the $1575$ contexts of a four-qubit hyperbolic quadric, every noncontextual hidden-variable model satisfies $\chi\le945$, the bound is attained, and quantum mechanics gives $\chi=1575$ in every four-qubit state.
Equivalently, if a context is chosen uniformly at random,
the best strategy based on a context-independent assignment
of outcomes to observables succeeds with probability exactly
$4/5$, whereas ideal quantum measurements satisfy the
context constraint with certainty.
\end{corollary}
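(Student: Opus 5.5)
The plan is to derive \cref{cor:inequality} directly from \cref{thm:main} together with the discussion preceding the statement. For a fixed noncontextual deterministic model with outcomes $(-1)^{x(p)}$, each context $C=\{u,v,u+v\}$ contributes
\[(-1)^{s(C)}(-1)^{x(u)+x(v)+x(u+v)},\]
which is $+1$ if $x$ satisfies $C$ and $-1$ if it violates $C$. So $\chi(x)=1575-2|\mathcal V(x)|$, where $\mathcal V(x)$ is the violated set. By \cref{thm:main}, $|\mathcal V(x)|\ge315$, hence
\[\chi(x)\le1575-630=945.\]
Equality holds for the witness assignment of \cref{app:witness}, which violates exactly $315$ lines.

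Next I would pass from deterministic to general noncontextual models. A general model is a probability mixture of deterministic assignments, possibly with hidden variables, and it is noncontextual because each observable's outcome does not depend on the context. Its value of $\chi$ is therefore a convex combination of the values $\chi(x)$, so the same bound $\chi\le945$ holds, as noted with reference to~\cite{Kelleher24}. On the quantum side, $(-1)^{s(C)}P(u)P(v)P(u+v)=I$ by the definition of the sign. Each correlator therefore equals $1$ in every state, giving $\chi=1575$.

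For the probabilistic reformulation, the success probability of an assignment $x$ on a uniformly random context is $1-|\mathcal V(x)|/1575$. This is at most $1-315/1575=4/5$, with equality for optimal assignments; for mixed strategies it is again an average of such values. Quantum measurement satisfies every constraint with certainty, by the operator identity above.

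The only point needing care is the step that mixtures do no better. That step is linearity of expectation once the model is written as a distribution over deterministic assignments $x\colon\text{points}\to\F_2$. I do not expect a genuine obstacle, since all the substance is in \cref{thm:main}.
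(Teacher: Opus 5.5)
Your proposal is correct and follows the paper's own argument: $\chi(x)=1575-2|\mathcal V(x)|$, the bound $|\mathcal V(x)|\ge315$ from \cref{thm:main}, linearity for mixtures, and the identity $(-1)^{s(C)}P(u)P(v)P(u+v)=I$ for the quantum value. One small point: the witness of \cref{app:witness} is for $Q_0$, so for any other hyperbolic quadric you should cite the upper-bound half of \cref{thm:main}, whose transported witness attains $315$ there.
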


The inequality is exact, but it is not stronger than known ones: the ratio of the classical to the quantum value, $945/1575=3/5$, is the same as for the doily ($9/15$), the three-qubit hyperbolic quadric ($63/105$) and the generalised quadrangle $GQ(2,4)=\Qm(5,2)$ ($27/45$)~\cite{Kelleher24}.
Although $\Qp(7,2)$ has $105$ times as many contexts as the
doily, the two configurations have the same minimum violated
fraction, $1/5$.
If measurement noise uniformly scales every ideal signed
correlator by a visibility $\eta$, the inequality is violated
exactly when $\eta>3/5$, as for the doily.
Its interest is rather that the classical bound is known exactly for a configuration this large, and that for $Q_0$ all observables are real, so the test needs only real-amplitude (rebit) measurements.
By \cref{thm:general}, every quadric of $N\ge3$ qubits violates at least a fifth of its contexts in every noncontextual model, so its ratio is at most $3/5$; the open question is which of them have a violated fraction strictly above $1/5$, that is a ratio strictly below $3/5$.
The heuristic upper bounds of~\cite{Muller24} for $W(7,2)$, $\Qm(7,2)$ and $\Qp(9,2)$ are all exactly $5/17$ of their lines, which leaves open a ratio as low as $7/17$.

\paragraph{Where the contextuality sits.}
\Cref{prop:ruling} says that in every optimal noncontextual model, the failures are spread perfectly evenly: each doily of the quadric fails on exactly one ruling of one of its grids.
\Cref{thm:dw} adds that the failures form one rigid geometric object, a $DW(5,2)$, unique up to the symmetry of the quadric.
This is the four-qubit analogue of the three-qubit picture of~\cite{Saniga24hex}, where the unsatisfied lines of $W(5,2)$ always form a classically embedded split Cayley hexagon.

\section{Formal verification in Lean}\label{sec:lean}
We formalised the theorem for $Q_0$ in Lean~4 (v4.35.0-rc2) with Mathlib~\cite{Lean4,Mathlib}.
The toolchain is the release candidate v4.35.0-rc2 because it is the one required by the Mathlib commit we use (\texttt{045acef0}); both are pinned in the repository (\texttt{lean-toolchain}, \texttt{lake-manifest.json}), so \texttt{lake build} fetches exactly these versions, and the build was reproduced from a fresh checkout on a second machine.
The final statement is
\begin{center}
\texttt{theorem degree\_eq\_315 : Contextuality.Pauli.degree 0 = 315}
\end{center}
where \texttt{degree w} is the infimum, over all maps $x:\mathbb N\to\mathbb Z/2$, of the number of violated contexts of the quadric $Q_w$.
\texttt{\#print axioms degree\_eq\_315} reports only \texttt{propext}, \texttt{Classical.choice} and \texttt{Quot.sound}: the proof uses no \texttt{sorry}, no added axiom, and in particular no \texttt{native\_decide}, which would trust the compiler and its runtime.

\paragraph{Definitions.}
\texttt{Basic.lean} defines points as natural numbers $v<256$ with the encoding of \cref{sec:prelim} (bit $i$ is $x_i$, bit $i+4$ is $z_i$), the symplectic form, the quadratic forms $Q_w$, the phase exponent $\varphi$ as a sum of single-qubit exponents, the contexts as the pairs $(a,b)$ with $a<b<a\oplus b$ satisfying the definition (computed inside Lean by filtering all $256^2$ pairs, so that the list of contexts is not an input), their signs, violation, and the degree.
\texttt{PhaseMatrix.lean} proves that $\varphi$ is the exponent in the product of the actual $16\times16$ Pauli matrices over $\mathbb C$ (\texttt{pauliMat\_mul}), so that the sign of every context is the sign of the operator product (\texttt{context\_product}); this ties the combinatorial definition to the physics.

\paragraph{Proof.}
\texttt{OddSets.lean} proves \cref{lem:odd} (\texttt{exists\_violated\_of\_odd}) for contexts given with point masks.
The lower bound follows \cref{thm:main}, with a smaller family than all $40\,320$ doilies: the $2520$ doilies forming the union of the orbits of the qubit permutations $S_4$ on a selected set, which cover every line exactly $24$ times (found by integer programming over the orbits).
For each doily the data lists its fifteen lines in the order of a fixed labelling of $GQ(2,2)$, so that the ten grids of \cref{lem:sylvester} sit at the same fifteen positions in every doily.
The kernel then checks:
(i) the keys of the $2520\cdot15=37\,800$ listed lines form a permutation of the $1575$ contexts, each repeated $24$ times (\texttt{keys\_perm});
(ii) in every doily, each of the ten grids is an odd set: the exclusive-or of the point masks of its six lines is $0$ and their signs sum to $1$;
(iii) each of the fifteen positions lies in four grids (\texttt{sum\_gridPos}).
Together with \cref{lem:odd} this gives \texttt{three\_le\_doily} and $24|\mathcal V|\ge3\cdot2520$, hence \texttt{le\_numViolated}: $315\le|\mathcal V|$ for every assignment. A kernel evaluation of the witness (\texttt{witness\_check}) gives the upper bound.
Nothing produced by the scripts that generate the data is trusted: every property of the data used in the proof is re-evaluated by the kernel, and the data's sortedness is never assumed.

\paragraph{Making the kernel computation fit on a laptop.}
All finite checks are discharged by \texttt{decide +kernel}, which evaluates a decidable proposition in the kernel.
The kernel caches intermediate terms, so its memory grows with the number of reduction steps.
Three naive checks did not fit: evaluating \texttt{contexts 0} (a filter over $256^2$ pairs) in one check exceeded a $4$\,GB limit, and a single sort-and-compare of all $37\,800$ keys was estimated, from the measured growth, at about $10$\,GB.
We restructured the proof so that no check is large:
\begin{itemize}[leftmargin=*]
\item \texttt{Q8PlusContexts.lean} states the contexts as $64$ literal chunks and proves \texttt{contexts\_eq : contexts 0 = (List.range 64).flatMap cchunk} by $64$ separate kernel checks; every later check rewrites with it instead of recomputing the filter.
\item The permutation check (i) is split twice: for each of $26$ chunks of $100$ doilies, a merge sort of its $1500$ keys equals a literal bucketed by \texttt{key/1024} (\texttt{sortOK}); for each of the $64$ buckets, the merged pieces from the $26$ chunks equal the corresponding chunk of contexts, each repeated $24$ times (\texttt{bucketOK}).
The merge functions are proved to permute their input (\texttt{mergeTree\_perm}) and a lemma on exchanging two \texttt{flatMap}s up to permutation (\texttt{flatMap\_comm\_perm}) glues the chunks, so the correctness of the sorting is never assumed.
\item Each Lean file imports only the Mathlib modules it needs: importing all of Mathlib alone costs about $4$\,GB of memory.
\end{itemize}
The development has $191$ \texttt{decide +kernel} checks. \Cref{tab:lean} lists its files.

\begin{table}[ht]
\centering\small
\begin{tabular}{@{}lrrl@{}}
\toprule
File & lines & build & content \\
\midrule
\texttt{Basic.lean} & 133 & 10\,s & definitions; \texttt{degree} \\
\texttt{OddSets.lean} & 144 & 34\,s & \cref{lem:odd}; merge sort and its permutation lemmas \\
\texttt{PhaseMatrix.lean} & 86 & 27\,s & phase rule for the Pauli matrices over $\mathbb C$ \\
\texttt{Q8Plus.lean} & 240 & 16\,s & the proof of \texttt{degree\_eq\_315} \\
\texttt{Q8PlusData.lean} (generated) & 6\,266 & 55\,s & $2520$ doilies, grid positions, witness \\
\texttt{Q8PlusContexts.lean} (generated) & 580 & 71\,s & contexts in $64$ chunks; $67$ kernel checks \\
\texttt{Q8PlusChecks.lean} (generated) & 226 & 276\,s & $120$ kernel checks: cover, sorting, grids \\
\bottomrule
\end{tabular}
\caption{The Lean development. Build times with one thread on the machine of \cref{sec:costs}, Mathlib precompiled; the whole build takes under ten minutes, and the peak memory, reached on \texttt{Q8PlusChecks.lean}, is 2.2\,GB.}
\label{tab:lean}
\end{table}

The formalisation covers the real quadric $Q_0$. The other $135$ hyperbolic quadrics follow from \cref{cor:allquadrics}, which we checked by computer but did not formalise.
The Lean proof does not formalise the general arguments of \cref{sec:doily}: instead of \cref{lem:collinear,lem:sylvester}, the kernel checks the grid structure and the odd-set property directly on each of the $2520$ doilies it uses. Likewise \cref{thm:general} for other quadrics and other $N$ is not formalised.
\Cref{thm:dw} is not formalised; its certificate is the DRAT proof of \cref{sec:sat}, checked by \texttt{drat-trim} and by \texttt{cake\_lpr}.

\section{Computations: encodings, certificates and costs}\label{sec:computations}
This section is written for readers who want to check or reuse the computations. All of them are in the repository; \cref{tab:costs} lists their running times.

\subsection{What is computed, and who checks it}
\begin{itemize}[leftmargin=*]
\item \emph{Finite facts used in the proofs} (numbers of quadrics, points, lines and signs; the doilies, their grids and multiplicities; the witness): computed by one Python program, recomputed by a second one written separately, and, for $Q_0$, re-evaluated by the Lean kernel. The signs are computed both from the phase rule and from $16\times16$ matrix products.
\item \emph{\Cref{thm:main}}: the lower bound is a proof by hand (\cref{sec:doily,sec:main}); the upper bound is the evaluation of the witness on $1575$ lines. Both are kernel-checked in Lean for $Q_0$.
\item \emph{\Cref{thm:dw}}: the completeness of the list of $240$ violated sets rests on the SAT encoding of \cref{sec:sat} and on a DRAT proof checked by \texttt{drat-trim} and, as LRAT, by the verified checker \texttt{cake\_lpr}. The orbit, the realisability of each set and the isomorphisms with $DW(5,2)$ are checked by separate code that does not use the SAT solver.
\item \emph{Remark on quadratic assignments}: an exhaustive search in C.
\end{itemize}

\subsection{The SAT encoding of the optimal assignments}\label{sec:sat}
The degree problem is an instance of MAX-XOR-SAT: $135$ Boolean variables and $1575$ parity constraints of length three, one per line, and the task is to satisfy as many as possible.
Rather than optimise, we use \cref{prop:ruling} to characterise the optimum by local constraints.
\begin{itemize}[leftmargin=*]
\item \emph{Variables.} $x_p$ for the $135$ points and $v_C$ for the $1575$ lines.
\item \emph{Parity.} For each line $C=\{a,b,c\}$, the eight clauses that forbid the assignments of $(v_C,x_a,x_b,x_c)$ with $v_C+x_a+x_b+x_c\neq s(C)$, so that $v_C$ holds iff $x$ violates $C$ ($12\,600$ clauses).
\item \emph{Local cardinality.} For each doily $D$ of a family $\mathcal F$ covering every line equally often, ``exactly three of the fifteen $v_C$, $C\in D$, are true'', encoded by PySAT's \texttt{CardEnc.equals} with its sequential-counter encoding~\cite{Sinz05}, which combines an ``at most $3$'' and an ``at least $3$'' counter ($144$ clauses and $72$ auxiliary variables per doily; these are PySAT's sizes, smaller than those of the textbook encoding). We use either the $2520$ doilies of the Lean proof or all $40\,320$.
\item \emph{Pair clauses} (redundant). For any two intersecting lines $C,C'$ of a common doily, the clause $\neg v_C\vee\neg v_{C'}$. There are $37\,800$ such pairs: through each point, two of its $35$ lines lie in a common doily iff they are not coplanar on the quadric. These clauses are not derived by the solver: they are justified by \cref{prop:doily}, applied to every doily of the quadric (not only to those of $\mathcal F$), and they are part of the input formula whose unsatisfiability the certificate checks.
\item \emph{Blocking clauses.} After a model with violated set $\mathcal V$, the clause $\bigvee_{C\in\mathcal V}\neg v_C$. Since every model has $|\mathcal V|=315$, it excludes exactly the models with violated set $\mathcal V$.
\end{itemize}

\begin{lemma}\label{lem:encoding}
With either family, with or without the pair clauses, the projections of the models onto the $x$-variables are exactly the optimal assignments of $Q_0$.
\end{lemma}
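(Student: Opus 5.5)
The plan is to prove two inclusions between the set of $x$-projections of models and the set of optimal assignments of $Q_0$, and to run the argument once for each choice of family $\mathcal F$ and of clause set. First fix notation. Let $\Phi$ be the formula consisting of the parity clauses and the local cardinality constraints over $\mathcal F$, and let $\Phi'$ be $\Phi$ together with the pair clauses. Since $\Phi'$ contains more clauses than $\Phi$, every model of $\Phi'$ is a model of $\Phi$. It therefore suffices to show two things: the $x$-projection of every model of $\Phi$ is optimal, and every optimal $x$ extends to a model of $\Phi'$.

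\emph{Soundness} (models give optimal assignments). Take a model of $\Phi$. The parity clauses for a line $C=\{a,b,c\}$ exclude exactly the tuples with $v_C+x_a+x_b+x_c\ne s(C)$. Hence $v_C$ is true precisely when $x$ violates $C$, so the true $v_C$ form the violated set $\mathcal V$ of $x$. Correctness of the sequential-counter encoding of \texttt{CardEnc.equals}~\cite{Sinz05} means that the projection of its models onto the $v_C$ with $C\in D$ is exactly the set of 15-tuples of weight $3$. So $|\mathcal V\cap D|=3$ for every $D\in\mathcal F$. Both families cover every line equally often: the $2520$ doilies cover each line $24$ times, as checked by \texttt{keys\_perm}, and the $40\,320$ doilies cover each line $384$ times by the transitivity argument of \cref{thm:main}. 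The converse part of \cref{prop:ruling} then gives $|\mathcal V|=315$, so $x$ is optimal.

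\emph{Completeness} (optimal assignments give models). Let $x$ be optimal. Set $v_C$ true exactly when $x$ violates $C$; this satisfies the parity clauses. The forward part of \cref{prop:ruling} says that every doily of $Q_0$ contains exactly three violated lines, and in particular every doily of $\mathcal F$ does. By correctness of the counter encoding, these weight-$3$ tuples can be extended by values of the auxiliary variables that satisfy each cardinality block. The auxiliary variables of different doilies are disjoint, so these extensions can be combined into a single assignment. For the pair clauses, \cref{prop:ruling} also says that in every doily the three violated lines are pairwise disjoint. Thus no two intersecting lines of a common doily are both violated, and every clause $\neg v_C\vee\neg v_{C'}$ holds.

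The main point needing care is the pair clauses. They are imposed for all $40\,320$ doilies, not only for those in $\mathcal F$, so completeness must rely on the forward direction of \cref{prop:ruling} over all doilies. That direction in turn needs the full transitivity-based equality argument, not just the constraints over $\mathcal F$.

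Two further facts are quoted rather than reproved. One is the auxiliary-variable semantics of PySAT's encoding, which we take as the standard correctness of the sequential counter. The other is that the $2520$-doily family covers every line equally often, which is the kernel-checked fact (i) of \cref{sec:lean}.
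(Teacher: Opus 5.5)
Your proposal is correct and is essentially the paper's argument: soundness is the converse of \cref{prop:ruling} applied to the equally covering family $\mathcal F$, and completeness is its forward direction (exactly three pairwise disjoint violated lines in every doily of $Q_0$, which also justifies the pair clauses over all doilies) with the sequential-counter auxiliaries set accordingly. Reducing the four variants to two claims, that models of $\Phi$ project to optima and that optima extend to models of $\Phi'$, is a clean way to organise the same proof.
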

\begin{proof}
By \cref{prop:ruling} (converse), every model is optimal. Conversely, an optimal assignment has exactly three violated lines in every doily, which are pairwise disjoint by \cref{prop:doily}, so it satisfies all cardinality and pair clauses with the auxiliary variables of the sequential counters set accordingly.
\end{proof}

No symmetry breaking is used: the gauge freedom of \cref{sec:optimal} leaves $256$ models per violated set, and the blocking clauses act on the $v$-variables only.
The final formula, the base formula plus the $240$ blocking clauses, has $183\,150$ variables and $413\,520$ clauses in the default configuration.
CaDiCaL~3.0.1~\cite{CaDiCaL} refutes it and writes a DRAT proof of $39$\,MB; \texttt{drat-trim}~\cite{DRATtrim} verifies it, using $120\,460$ of the $560\,025$ lemmas in the run recorded in the repository (CaDiCaL's proof differs slightly between machines; the formula itself is fixed by its SHA-256 hash in \texttt{drat/check.log}).
The proof was also converted to LRAT by \texttt{drat-trim} ($22$\,MB) and checked by the verified checker \texttt{cake\_lpr}~\cite{cakeLPR}, whose correctness is proved in HOL4 down to machine code.
The certificate establishes only that the final formula is unsatisfiable. Three parts of the argument are trusted rather than certified: the encoder, a short Python program that writes the formula; \cref{lem:encoding}, including the justification of the pair clauses by \cref{prop:doily}; and the enumeration loop, whose only role is to add one blocking clause per violated set it finds, all of which appear in the certified final formula.

\subsection{What made the enumeration easy}\label{sec:ablation}
Muller \emph{et al.}~\cite{Muller23,Muller24} attacked the degree with SAT solvers and with heuristics directly, and the direct formulations remain hard.
\Cref{tab:ablation} compares, with the same solver and a time limit of $30$ minutes, two direct formulations with the four variants of our encoding.
The direct formulations are a MaxSAT instance (the parity clauses as hard clauses, $\neg v_C$ as $1575$ soft clauses, solved by RC2~\cite{RC2}) and the decision problem ``at most $314$ violated lines'' with a totalizer encoding~\cite{Totalizer}, whose unsatisfiability is the lower bound of \cref{thm:main}.

\begin{table}[ht]
\centering\small
\resizebox{\linewidth}{!}{%
\begin{tabular}{@{}llrrl@{}}
\toprule
Formulation & & variables & clauses & result within 30 min \\
\midrule
MaxSAT (RC2) & & 1\,710 & 12\,600 + 1\,575 soft & not solved \\
SAT, at most $314$ violated (totalizer) & & 18\,562 & 1\,268\,978 & not solved \\
\midrule
local cardinality, $2520$ doilies & & 183\,150 & 375\,480 & all 240 by 44\,s; not refuted \\
local cardinality, $40\,320$ doilies & & 2\,904\,750 & 5\,818\,680 & all 240 by 1363\,s; refuted at 1447\,s \\
local cardinality, $2520$ doilies + pairs & & 183\,150 & 413\,280 & all 240 by 29\,s; refuted at 31\,s \\
local cardinality, $40\,320$ doilies + pairs & & 2\,904\,750 & 5\,856\,480 & all 240 by 403\,s; refuted at 407\,s \\
\bottomrule
\end{tabular}}
\caption{Formulations of the degree problem for $Q_0$, solved by CaDiCaL~1.5.3 through PySAT (RC2 with its default Glucose~3 oracle) on the machine of \cref{sec:costs}, one thread, $30$-minute limit. For the enumerations, the times are until the last of the $240$ violated sets is found and until the final formula is refuted.}
\label{tab:ablation}
\end{table}

Three observations stand out.
First, neither direct formulation finishes: RC2 does not reach the optimum and the solver does not refute ``at most $314$'' within $30$ minutes, although \cref{thm:main} shows that the answer is $315$.
The lower bound is therefore not merely faster to prove by hand; for these generic encodings it is out of reach of the solver in this time.
Second, the local cardinality constraints make the \emph{search} easy: every variant finds all $240$ optimal violated sets.
Third, \emph{completeness}, the refutation after the last blocking clause, depends on the redundant pair clauses: without them the small family is not refuted within $30$ minutes and the large one needs $84$\,s for the last call alone; with them the final call takes $2$\,s and $4$\,s.
The pair clauses encode the second half of \cref{prop:doily}, that the violated lines of a doily are pairwise disjoint, so they are a lemma of the proof handed to the solver.
The smaller family is also better: the $2520$ doilies of the Lean proof give a formula $14$ times smaller than all $40\,320$, and the full enumeration takes $31$\,s instead of $407$\,s.

\subsection{Running times}\label{sec:costs}
All times in \cref{tab:costs} are wall-clock times on a laptop (Apple M3, $8$ cores, $16$\,GB, macOS~14.5), with one thread and at low priority while other programs were running, so they are upper bounds on what a dedicated machine needs.
Software: Python~3.12.12 with NumPy~2.5.4 and PySAT~1.9 (which bundles CaDiCaL~1.5.3); the standalone CaDiCaL~3.0.1; \texttt{drat-trim} at commit \texttt{2e3b2dc}; \texttt{cake\_lpr} at commit \texttt{a36874a}; Lean~4 v4.35.0-rc2 with Mathlib at commit \texttt{045acef0}.

\begin{table}[ht]
\centering\small
\resizebox{\linewidth}{!}{%
\begin{tabular}{@{}lll@{}}
\toprule
Computation & tool & time \\
\midrule
all finite facts of the paper (\texttt{verify\_facts.py}) & Python & 47\,s \\
\quad with the degree $3$ of all $40\,320$ doilies by brute force over $2^{15}$ assignments & Python, NumPy & 55\,s \\
separately written re-check of the bound and the witness (\texttt{independent\_check.py}) & Python, NumPy & 72\,s \\
search for a witness (\texttt{find\_witness.py}) & Python & under 1\,s \\
enumeration of the $240$ optimal violated sets (\texttt{enumerate\_optimal.py}) & CaDiCaL~1.5.3 & 30\,s \\
\quad of which the final refutation & & 2\,s \\
DRAT proof of the final refutation ($39$\,MB) & CaDiCaL~3.0.1 & 25\,s \\
DRAT check & \texttt{drat-trim} & 11--17\,s \\
LRAT check of the converted proof & \texttt{cake\_lpr} & 3\,s \\
orbit, realisability, $240$ isomorphisms with $DW(5,2)$ (\texttt{optimal\_structure.py}) & Python & 4\,s \\
table of lower bounds for all quadrics (\texttt{general\_bounds.py}) & Python & 0.1\,s \\
all $2^{28}$ quadratic assignments & C & 2.4\,s \\
Lean build of \texttt{degree\_eq\_315} (Mathlib precompiled) & Lean~4 & under 10\,min \\
axiom report & Lean~4 & 15\,s \\
\bottomrule
\end{tabular}}
\caption{Running times of all computations (wall clock, one thread).}
\label{tab:costs}
\end{table}

\section{Open problems}\label{sec:open}
\begin{enumerate}[leftmargin=*]
\item \emph{The elliptic quadric and the whole space.} For four qubits, $215\le\dg\Qm(7,2)\le315$ and $1071\le\dg W(7,2)\le1575$ from \cref{thm:general} and~\cite{Muller24}. Every line of $W(7,2)$ lies on $24$ of the $120$ elliptic quadrics, so $\dg\Qm(7,2)=315$ would give $\dg W(7,2)\ge 120\cdot315/24=1575$ and hence equality. The doily bound cannot reach $315$ here, and neither can averaging over the $136$ parabolic sections $\mathcal Q(6,2)$ of $\Qm(7,2)$, which also gives only $\lceil136\cdot63/40\rceil=215$; larger sub-configurations are needed.
\item \emph{Five qubits.} Is $\dg\Qp(9,2)=4743$? The heuristic upper bound $6975$ of~\cite{Muller24} is far above.
\item \emph{A conceptual proof of \cref{thm:dw}.} \Cref{prop:ruling} characterises the optimal violated sets by local conditions on doilies. Is there a direct proof that these conditions force a $DW(5,2)$, for instance through the Lagrangian Grassmannian map from the planes of $W(5,2)$ to the points of $\Qp(7,2)$~\cite{Holweck14,Muller24}?
\item \emph{Higher rank.} Does a hyperbolic quadric $\Qp(2N-1,2)$ have degree $L/5$ whenever $5$ divides $L$, and are its optimal violated sets again a single orbit?
\end{enumerate}

\section*{Code and data availability}
All scripts, the witness, the SAT certificates and the Lean development are at \repo; \cref{app:computations} lists what each checks.

\appendix

\section{The witness and an example doily}\label{app:witness}
The following assignment of $Q_0$ violates exactly $315$ of its $1575$ lines. It gives the value $1$ (outcome $-1$) to the $71$ observables below and $0$ (outcome $+1$) to the other $64$ real four-qubit observables.
\begin{center}
\small\ttfamily
\begin{tabular}{@{}*{12}{l}@{}}
IIIX & IIIZ & IIXX & IIXZ & IIZX & IIZZ & IXIX & IXIZ & IXXZ & IXZX & IXZZ & IYIY \\
IYYI & IYYX & IYZY & IZII & IZIZ & IZXI & IZYY & IZZZ & XIII & XIIX & XIXX & XIYY \\
XIZI & XXII & XXXI & XXXX & XXYY & XXZI & XYXY & XYYZ & XZII & XZXI & XZXZ & XZZI \\
XZZX & YIIY & YIXY & YIYI & YIYZ & YXIY & YXXY & YXYI & YXZY & YYII & YYXI & YYZI \\
YZYX & YZZY & ZIIX & ZIIZ & ZIXX & ZIXZ & ZIZX & ZXIX & ZXIZ & ZXXX & ZXXZ & ZXYY \\
ZXZX & ZXZZ & ZYIY & ZYXY & ZYYI & ZYZY & ZZII & ZZXI & ZZYY & ZZZI & ZZZZ &
\end{tabular}
\end{center}
The $315$ violated lines pass through every point, seven through each, and form a $DW(5,2)$ (\cref{thm:dw}).

An example of a doily of $Q_0$: the singular points of the subspace spanned by \obs{XIII}, \obs{IXII}, \obs{ZIII}, \obs{IZIX} and \obs{YIYI}. Its fifteen points are
\obs{XIII}, \obs{IXII}, \obs{XXII}, \obs{ZIII}, \obs{ZXII}, \obs{IZIX}, \obs{XZIX}, \obs{ZZIX}, \obs{YYIX}, \obs{YIYI}, \obs{YXYI}, \obs{IYYX}, \obs{XYYX}, \obs{YZYX}, \obs{ZYYX},
and three of its fifteen lines are negative: $\{\obs{XXII},\obs{ZZIX},\obs{YYIX}\}$, $\{\obs{ZXII},\obs{XYYX},\obs{YZYX}\}$ and $\{\obs{XZIX},\obs{YXYI},\obs{ZYYX}\}$.

\section{Reproducing the computations}\label{app:computations}
From the folder \texttt{code} of the repository, with Python and the packages \texttt{numpy} and \texttt{python-sat}:
\begin{center}\small
\resizebox{\linewidth}{!}{%
\begin{tabular}{@{}ll@{}}
\toprule
command & reproduces \\
\midrule
\texttt{python verify\_facts.py -{}-brute} & every finite fact of \cref{sec:prelim,sec:doily,sec:main}; the witness \\
\texttt{python independent\_check.py} & the same, separately written \\
\texttt{python enumerate\_optimal.py -{}-cnf} & \cref{thm:dw}: the $240$ sets and the final formula \\
\texttt{sh drat/check.sh} & the DRAT proof, its check, and the LRAT check by \texttt{cake\_lpr} (needs CaDiCaL and \texttt{drat-trim}; \texttt{cake\_lpr} if present) \\
\texttt{python optimal\_structure.py} & \cref{thm:dw}: orbit, realisability, isomorphisms with $DW(5,2)$ \\
\texttt{sh sat\_ablation.sh}, \texttt{sh sat\_direct.sh} & \cref{tab:ablation} \\
\texttt{python general\_bounds.py} & \cref{tab:general} \\
\texttt{sh quadratic/run.sh} & the remark on quadratic assignments \\
\bottomrule
\end{tabular}}
\end{center}
In the folder \texttt{lean}, run \texttt{lake exe cache get} to download the compiled Mathlib, then \texttt{lake build}, and finally \texttt{lake env lean scripts/AxiomReport.lean} for the axiom report.
Every script exits with a non-zero status if a check fails.

\bibliographystyle{plain}
\bibliography{refs}
\end{document}